\documentclass[letterpaper]{article}

\usepackage{geometry}
\geometry{hmargin = 3cm, vmargin = 3cm}

\usepackage{times}
\usepackage{soul}
\usepackage{url}
\usepackage[hidelinks]{hyperref}
\usepackage[utf8]{inputenc}
\usepackage[small]{caption}
\usepackage{graphicx}
\usepackage{amsmath}
\usepackage{amsthm}
\usepackage{booktabs}
\usepackage{algorithm}
\usepackage{algorithmic}
\urlstyle{same}

\newcommand{\citet}[1]{\citeauthor{#1} \shortcite{#1}}
\newcommand{\citep}[1]{\cite{#1}}

\usepackage{wrapfig}
\usepackage{caption}
\usepackage{subcaption}
\usepackage{amsfonts,amssymb, amsthm}
\usepackage{stmaryrd}
\usepackage{thmtools, thm-restate}

\usepackage{natbib}

\usepackage[switch,mathlines]{lineno}

\usepackage{tikz}
\usetikzlibrary{arrows.meta}
\usetikzlibrary{automata, positioning, calc, shapes, arrows, fit}

\theoremstyle{plain}

\newtheorem{proposition}{Proposition}
\newtheorem{definition}{Definition}

\usepackage{fontawesome}

\usepackage{xcolor}
\definecolor{MyOrange}{rgb}{1, 0.4, 0}
\definecolor{MyLightOrange}{rgb}{1, 0.9, 0.7}
\definecolor{MyGray}{rgb}{0.3, 0.3, 0.3}
\definecolor{MyLightGray}{rgb}{0.5, 0.5, 0.5}
\definecolor{MyGreen}{rgb}{0, 0.6, 0}
\definecolor{MyBlue}{rgb}{0, 0, 0.6}
\definecolor{MyLightBlue}{rgb}{0.7, 0.8, 1}
\definecolor{MyRed}{rgb}{0.7, 0, 0}
\definecolor{MyLightRed}{rgb}{1, 0.7, 0.7}
\definecolor{MyLightYellow}{HTML}{f9f6ec}

\usepackage{pifont}
\newcommand{\cmark}{{\color{MyGreen}\ding{51}}}
\newcommand{\xmark}{{\color{MyRed}\ding{55}}}
\newcommand{\cmarkorange}{{\color{MyOrange}\ding{51}}}


\newcommand{\alt}{A}
\newcommand{\prof}[1]{\boldsymbol{#1}} 
\newcommand{\CW}{\textit{CW}}
\newcommand{\majundom}{\textit{MajUD}}
\newcommand{\majdom}{\textit{MajDom}} 
\newcommand{\plurundom}{\textit{PlurUD}}
\newcommand{\plurdom}{\textit{PlurDom}}
\newcommand{\unanundom}{\textit{UnanUD}}
\newcommand{\unandom}{\textit{UnanDom}}
\DeclareMathOperator{\dyn}{MD}
\newcommand{\dyns}[1]{\dyn_{#1}}
\newcommand{\support}[1]{N^{\prof{P}}_{#1}}
\newcommand{\supportz}[2]{N^{\prof{P}^{#1}}_{#2}}

%

\newcommand\arrowto[1][1.4em]{\tikz[baseline=-0.5ex, 
	shorten <=2pt, shorten >=2pt] \draw[-stealth] (0,0) -- (0.5,0);}
\renewcommand{\emptyset}{\varnothing}
\newcommand{\argmax}{\operatornamewithlimits{argmax}}

\newtheorem{myex}{Example}
\newenvironment{ex}{\begin{myex}\rm}{\hfill$\vartriangle$\end{myex}}
\newtheorem*{myexcon}{Example 1 (continued)} 
\newenvironment{excon}{\begin{myexcon}\rm}{\hfill$\vartriangle$\end{myexcon}}

\let\oldcenter\center
\let\oldendcenter\endcenter
\renewenvironment{center}{\setlength\topsep{0pt}\oldcenter}{\oldendcenter}

\begin{document}
	
	\title{Let's Agree to Agree:
		Targeting Consensus for Incomplete Preferences through Majority Dynamics}
	
	\author{ 
		Sirin Botan$^1$ \hspace{2em} Simon Rey$^2$ \hspace{2em} Zoi Terzopoulou$^3$\footnote{Contact Author}
	}

	\date{
		$^1$UNSW Sydney\\
		$^2$Institute for Logic, Language and Computation (ILLC), University of Amsterdam\\
		$^3$LAMSADE, Université Paris Dauphine-PSL\\[1em]
		s.botan@unsw.edu.au \hspace{2em} s.j.rey@uva.nl. \hspace{2em} zoi.terzopoulou@dauphine.psl.eu
	}
	
	\maketitle
	
	\begin{abstract}
		We study settings in which agents with incomplete preferences need to make a collective decision. We focus on a process of majority dynamics where issues are addressed one at a time and undecided agents follow the opinion of the majority. We assess the effects of this process on various consensus notions---such as the Condorcet winner---and show that in the worst case, myopic adherence to the majority damages existing consensus; yet, simulation experiments indicate that the damage is often mild. We also examine scenarios where the chair of the decision process can control the existence (or the identity) of consensus, by determining the order in which the issues are discussed. 
	\end{abstract}
	
	
	\section{Introduction}
	
	Groups of agents often need to make decisions by finding a consensus between different individual opinions: Amongst friends,
	hiring committees and teams of reviewers, as well as multiagent systems, reaching collective consensus is not always easy. Such situations become even less straightforward when agents hold incomplete opinions,which is often the case in real life, due to uncertainty, lack of knowledge, or reduced interest about the issues in question. Consider an example:
	\begin{ex}
		\label{ex:incomplete-apps}
		Five organisers of an online conference must collectively decide about which video-conferencing app to use:  \emph{AppEar}~($a$), \emph{Bridge}~($b$), or \emph{C--nnect}~($c$)? Having never used Bridge, two organisers hold no opinion on it, but think that AppEar is better than C--nnect (denoted by $a \mathop{\arrowto} c$).  Two organisers familiar with AppEar and Bridge find Bridge superior,  while one organiser has never heard of any app:
		\begin{center}
			\begin{tikzpicture}[scale=0.6]
				\node[anchor = south] (a) at (0,0) {$a$};
				\node[anchor = south] (b) at (0,-1) {$c$};
				\node[anchor = south] (c) at (0.5,0) {$b$};
				
				\draw[-stealth](a)to(b);
				
				\node[anchor = south] (a) at (2,0) {$a$};
				\node[anchor = south] (b) at (2,-1) {$c$};
				\node[anchor = south] (c) at (2.5,0) {$b$};
				
				\draw[-stealth](a)to(b);

				\node[anchor = south] (a) at (4,0) {$b$};
				\node[anchor = south] (b) at (4,-1) {$a$};
				\node[anchor = south] (c) at (4.5,0) {$c$};
				
				\draw[-stealth](a)to(b);
				
				\node[anchor = south] (a) at (6,0) {$b$};
				\node[anchor = south] (b) at (6,-1) {$a$};
				\node[anchor = south] (c) at (6.5,0) {$c$};
				
				\draw[-stealth](a)to(b);
				
				\node[anchor = south] (a) at (8,0) {$a$};
				\node[anchor = south] (b) at (8.5,0) {$b$};
				\node[anchor = south] (c) at (9,0) {$c$};
			\end{tikzpicture}
		\end{center}
		\noindent A consensus alternative for these opinions is not obvious. 
	\end{ex}
	
	\noindent In two rapidly growing streams of literature within computational social choice, \emph{opinion diffusion} and \emph{liquid democracy}, agents 
	adopt the opinions of their peers during collective decision processes---either directly by embracing them or indirectly by delegating their vote \citep{bredereck2017manipulating,brill2018pairwise,botan2019multi,bloembergen2019rational}. In our model, agents also consult their peers, but only about issues on which they initially hold no opinion. Supposing that the group discusses one issue at a time (here the comparison between two alternatives), as commonly happens in practice, all agents with a missing opinion on it will adopt the opinion of the majority. For agents who trust all their peers equally, following majority is indeed the best option, both if they aim at 
	maximising agreement within the group, or at having the highest chances of making the correct decision \citep{condorcet1785essay,may1952set}.\footnote{\citet{rawls1971theory} also advocated the idea that agents hold twofold preferences---in the first level, their intrinsic preferences on the issues; in the second level, they wish to agree with their group.}
	
	Limited by constraints on time and energy, agents do their best to reach consensus \emph{locally} for each single issue that is being discussed, but the group still targets consensus \emph{globally}, after all issues have been addressed. \emph{Majority dynamics} ($\dyn$) can help with that, as illustrated below. 
	
	\begin{excon}
		Suppose that the discussion starts with AppEar in comparison to Bridge. Since only two organisers have a relevant opinion (that $b$ is better than $a$), everyone will adopt it. Those who find $a$ better than $c$ will also rank $b$ above $c$, as a matter of consistency.\footnote{Formally, this consistency requirement is \emph{transitivity}.} Next, suppose that Bridge and C--nnect are discussed. Note that only two organisers rank them, and both rank $b$ above $c$, so this opinion will be adopted by everyone. The last issue addressed is the comparison between AppEar and C--nnect. Again everyone will adopt the only existing opinion: $a$ above $c$.
		
		\begin{center}
			\begin{tikzpicture}
				\node[anchor = south east, label = {[label distance =-0.6em]-90:(\textit{a}) after discussing~$a$ and~$b$}] at (-0.15, 0) {
					\begin{tikzpicture}[scale=0.6, anchor = south]
						\node (a) at (0,0) {$b$};
						\node (b) at (0,-1) {$a$};
						\node (c) at (0,-2) {$c$};
						
						\draw[-stealth](a)to(b);
						\draw[-stealth](b)to(c);
						
						\node (a) at (1,0) {$b$};
						\node (b) at (1,-1) {$a$};
						\node (c) at (1,-2) {$c$};
						
						\draw[-stealth](a)to(b);
						\draw[-stealth](b)to(c);
						
						\node (a) at (2,0) {$b$};
						\node (b) at (2,-1) {$a$};
						\node (c) at (2.5,0) {$c$};
						
						\draw[-stealth](a)to(b);
						
						\node (a) at (3.5,0) {$b$};
						\node (b) at (3.5,-1) {$a$};
						\node (c) at (4,0) {$c$};
						
						\draw[-stealth](a)to(b);
						
						\node (a) at (5,0) {$b$};
						\node (b) at (5,-1) {$a$};
						\node (c) at (5.5,0) {$c$};
						
						\draw[-stealth](a)to(b);
					\end{tikzpicture}
				};
				
				\node[anchor = south west, label = {[label distance =-0.6em]-90:(\textit{b}) after discussing~$b$ and~$c$}] at (0.1, 0) {
					\begin{tikzpicture}[scale=0.6, anchor = south]
						\node (a) at (0,0) {$b$};
						\node (b) at (0,-1) {$a$};
						\node (c) at (0,-2) {$c$};
						
						\draw[-stealth](a)to(b);
						\draw[-stealth](b)to(c);
						
						\node (a) at (1,0) {$b$};
						\node (b) at (1,-1) {$a$};
						\node (c) at (1,-2) {$c$};
						
						\draw[-stealth](a)to(b);
						\draw[-stealth](b)to(c);
						
						\node (a) at (2.5,0) {$b$};
						\node (b) at (2,-1) {$a$};
						\node (c) at (3,-1) {$c$};
						
						\draw[-stealth](a)to(b);
						\draw[-stealth](a)to(c);
						
						\node (a) at (4,0) {$b$};
						\node (b) at (3.5,-1) {$a$};
						\node (c) at (4.5,-1) {$c$};
						
						\draw[-stealth](a)to(b);
						\draw[-stealth](a)to(c);
						
						\node (a) at (5.5,0) {$b$};
						\node (b) at (5,-1) {$a$};
						\node (c) at (6,-1) {$c$};
						
						\draw[-stealth](a)to(b);
						\draw[-stealth](a)to(c);
					\end{tikzpicture}
				};
				
				\node[anchor = north, label = {[label distance =-0.6em]-90:(\textit{c}) after discussing~$a$ and~$c$}] at (0, -0.155) {
					\begin{tikzpicture}[scale=0.6, anchor = south]
						\node (a) at (0,0) {$b$};
						\node (b) at (0,-1) {$a$};
						\node (c) at (0,-2) {$c$};
						
						\draw[-stealth](a)to(b);
						\draw[-stealth](b)to(c);
						
						\node (a) at (1,0) {$b$};
						\node (b) at (1,-1) {$a$};
						\node (c) at (1,-2) {$c$};
						
						\draw[-stealth](a)to(b);
						\draw[-stealth](b)to(c);

						\node (a) at (2,0) {$b$};
						\node (b) at (2,-1) {$a$};
						\node (c) at (2,-2) {$c$};
						
						\draw[-stealth](a)to(b);
						\draw[-stealth](b)to(c);
						
						\node (a) at (3,0) {$b$};
						\node (b) at (3,-1) {$a$};
						\node (c) at (3,-2) {$c$};
						
						\draw[-stealth](a)to(b);
						\draw[-stealth](b)to(c);
						
						\node (a) at (4,0) {$b$};
						\node (b) at (4,-1) {$a$};
						\node (c) at (4,-2) {$c$};
						
						\draw[-stealth](a)to(b);
						\draw[-stealth](b)to(c);
						
					\end{tikzpicture}
				};
			\end{tikzpicture}
		\end{center}

		\noindent An obvious consensus now exists: alternative~$b$.\footnote{In our specific example, the group reached a strong consensus, fully agreeing on the complete preference order. We will later discuss much weaker notions of consensus too.} 
	\end{excon}
	
	\noindent Does MD assist with consensus beyond our example? Does the order of discussion matter? 
	In this paper, we introduce a novel framework of majority dynamics for incomplete preferences, and use analytical and experimental tools to investigate its consequences with respect to a number of established consensus notions. Our main contributions show that $(i)$~majority dynamics can damage even majority-based consensus alternatives such as the Condorcet winner (though not in some restricted preference domains); $(ii)$~a chair can control the resulting consensus by determining the order of discussion, leading to controversial collective decisions; and $(iii)$~the worst-case effects described in $(i)$ and $(ii)$ are in fact rare.
	
	Earlier work has employed notions of consensus as a way to \emph{rationalise} voting rules that minimise the distance from an ideal decision, given complete preferences \citep{elkind2015distance}. Moreover, recent research has examined incomplete preferences from various static perspectives, e.g., axiomatic ones \citep{pini2009aggregating,TerzopoulouEndrissIJCAI2019}. This paper's angle is different. Instead of applying a voting rule, we search for direct agreement within the group in the form of a consensual alternative, which leads to outcomes that are less debatable and better explainable to the participating agents (note that choosing the best voting rule is a long-standing discussion in the social choice community). The process investigated by \citet{eklund2007consensus} bears a similar motivation, but assumes that agents evaluate alternatives with fixed criteria in mind, and that a chair has the (often unjustified) power to force the change of someone's preference.  \citet{auletta2018reasoning} also study (majority-based) consensus in settings of opinion diffusion, but focus on social networks, and their results are limited to a single binary issue.
	
	Moreover, many \emph{consensus measures} have been defined to study the degree of discordance between the agents' preferences in a group, without the goal of selecting a final alternative \citep{herrera2011consensual}. These measures are usually based on distances and are associated with normative properties such as unanimity, suggesting that a group of agents with the same preferences has the highest possible consensus \citep{boschcharacterizations,alcalde2011measuring}. 
	
	Finally, manipulation by the chair has been explored under the name of \emph{agenda manipulation} in judgment aggregation \citep{DietrichJET2016}, and notions of \emph{bribery} and \emph{control} in voting \citep{faliszewski2009llull,faliszewski2015complexity}. Although related in flavour, these works are technically far from ours. 
	
	This paper proceeds as follows. Section~\ref{sec:model} defines our 
	model and the basic consensus notions. 
	Section~\ref{sec:preserving}  studies the worst-case effects of the process of majority dynamics on consensus, including special cases of restricted domains. Section~\ref{sec:controlling} solves the same exercise for the problem of consensus control by the chair, and Section~\ref{sec:experiments} includes our experimental design and results. Finally, Section~\ref{sec:conclusion} concludes. 
	

	\begin{table*}[]
		\centering
		\resizebox{\linewidth}{!}{
		\begin{tabular}{ccccccccc}
			\toprule
			\multicolumn{2}{c}{\textbf{Quantifier Over}} & \multicolumn{2}{c}{\textbf{Preserving Consensus}} & \multicolumn{2}{c}{\textbf{Losing Consensus}} & \multicolumn{2}{c}{\textbf{Generating Consensus}} & \textbf{Preserving Absence}\\
			Profiles & Orders &  Existence & Identity &  Existence &  Identity & Existence & Identity & \textbf{of Consensus}\\
			\midrule
			
			$\forall$ & $\forall$ & Def.~\ref{dfn:ex-preserving} & Def.~\ref{dfn:id-preserving}  & \xmark{} co & \xmark{} co& \xmark{} co & \xmark{} $\emptyset$ & \xmark{} $\emptyset$ \\
			
			$\forall$ & $\exists$ & Def.~\ref{dfn:ex-ccontrol}  & --- & \xmark{} co & --- & \xmark{} co & --- & \xmark{} $\emptyset$  \\
			
			$\exists$ & $\forall$ & \cmark{} co & \cmark{} co   & $\overline{\text{Def.~\ref{dfn:ex-ccontrol}}}$  & $\overline{\text{Def.~\ref{dfn:id-ccontrol}}}$ & \cmark{} $\emptyset$   & $\overline{\text{Def.~\ref{dfn:dcontrol}}}$ & \cmark{} co \\
			
			$\exists$ & $\exists$ & \cmark{} co & ---  & $\overline{\text{Def.~\ref{dfn:ex-preserving}}}$  & --- & \cmark{} $\emptyset$ & ---  & \cmark{} co \\
			\bottomrule
		\end{tabular}
		}
		\caption{Effects on consensus. Each cell points to the answer source for the relevant question. For example, according to the cell at the first row and third (fourth) column, Def.~\ref{dfn:ex-preserving} (Def.~\ref{dfn:id-preserving}) captures the idea of preserving the existence (identity) of consensus for all preference profiles and all update orders.  `---' means that the question is meaningless: we can only inquire about different identities of consensus under more than one order; `\cmark{}' and `\xmark{}' indicate a trivial positive or negative answer, respectively, by considering complete (co) or empty ($\emptyset$) preference profiles. Overlined words denote the statement of a definition in the contrapositive.}
		
		\label{tab:effects}
	\end{table*}
	
	\section{The Model} \label{sec:model}
	
	This section presents our model, terminology, and notation.
	
	\subsection{Preliminaries}\label{sec:prelim}
	
	Let $N = \{1,\dots, n\}$ be a finite set of agents, and $\alt = \{a,b,c,...\}$ a finite set of $m \geq 3$ alternatives.\footnote{Our counterexamples use the minimum number of needed alternatives, but extend to any larger number by including dummy ones.}
	Each agent~$i \in N$ reports a strict partial order $\succ_i$ (an acyclic and transitive binary relation), which we call her \emph{preference} and draw as a directed acyclic graph. We can also represent each preference as a set of pairwise rankings over alternatives. By~$ab$ we refer to the ordered pair of the alternatives~$a$ and~$b$. For instance, if agent~$i$ prefers $a$ to~$b$ and $b$ to~$c$ (and thus also prefers $a$ to~$c$ because of transitivity), her preference is the set~$\{ab,bc,ac\}$. 
	
	A special kind of incomplete preferences are \emph{strict weak orderings}: they represent alternatives ranked in different levels, with alternatives of the same level being incomparable. Concretely, $\succ_i$ is a strict weak ordering if there exists some ordered partition $S_i = (S_i^1, \dots, S_i^k)$ of $A$ such that:
	\begin{enumerate}
		\item[(i)] for any $a, b \in S_i^g$ we have $ab \not\in {\succ_i}$ and $ba \not\in {\succ_i}$, and 
		\item[(ii)] for any $S_i^g$ and $S_i^h$ such that $g < h$, it must be the case that $a \succ_i b$ for all $a \in S_i^g$ and $b \in S_i^h$. 
	\end{enumerate}
	
	\noindent  \emph{Top-truncated} and \emph{bottom-truncated} preferences  are natural types of strict weak orderings, strictly ranking a subset of the alternatives and placing the remaining---incomparable---ones below or above them, respectively \citep{baumeister2012campaigns}.\footnote{Our results for strict weak orderings hold for bottom-truncated preferences. Top-truncated preferences differ very little from complete ones in the context of consensus, and are thus less interesting.} 
	
	Then, a (possibly incomplete) preference \emph{profile}~$\prof{P} = (\succ_1, \dots, \succ_n)$ is a vector of preferences of size $n$. We denote by $\support{ab} = |\{i \in N \mid a \succ_i b\}|$ the \emph{support} of $ab$, \textit{i.e.}, the number of agents who prefer alternative $a$ over $b$ in the profile~$\prof{P}$. 
	
	There are special structural characteristics exhibited by a preference profile that  make the achievement of a collective decision an unequivocal task (recall Example~\ref{ex:incomplete-apps}).
	Each of the following notions of consensus generalises an existing one from the literature on complete preferences \citep{elkind2015distance}. We start with a traditional concept:
	
	\begin{itemize}
		\item[$\blacktriangleright$] An alternative $a$ is a \textbf{Condorcet winner} (CW)  in $\prof{P}$ if $\support{ab} > \support{ba}$ for all $b \in A\setminus \{a\}$. 
	\end{itemize}
	
	\noindent Next, we distinguish two notions that capture an alternative being on ``top''  of an incomplete preference, which are unambiguous in the complete case: An alternative~$a$ is \emph{undominated} (abbreviated `UD') in $\succ$ if there is no $b \in A$ such that $b \succ a$, and, $a$ is \emph{dominant} (abbreviated `Dom') if  $a \succ b$ for all $b \in A\setminus \{a\}$. We define consensus for the former notion, and the latter is analogous.
	
	\begin{itemize}
		\item[$\blacktriangleright$] An alternative $a$ is \textbf{unanimity undominated} (\unanundom) in $\prof{P}$ if it is undominated in $\succ_i$ for all $i \in N$.
		\item[$\blacktriangleright$] An alternative $a$ is \textbf{majority undominated} (\majundom)  in $\prof{P}$ if $|\{i \in N \mid a \text { is undominated in } \succ_i \}|> |N|/2$.
		\item[$\blacktriangleright$] An alternative $a$ is \textbf{plurality undominated} (\plurundom) in $\prof{P}$ if $|\{i \in N \mid a \text { is undominated in } \succ_i\}| >0$ and $ a \in \argmax_{b\in \alt}|\{i \in N \mid b \text { is undominated in } \succ_i\}|$.
	\end{itemize}
	
	\noindent More than one alternative in a profile~$\prof{P}$ may satisfy the  undominated consensus definitions (and the plurality  dominant one). Importantly, we only say that a profile exhibits consensus when there is a \emph{unique} such alternative. 
	
	Given a consensus notion~$C \in \{$`\CW', `\unanundom', `\unandom', `\majundom', `\majdom', `\plurundom', `\plurdom'$\}$ and a preference profile~$\prof{P}$, we define $C(\prof{P})\in \alt \cup \{\bot\}$ to be the consensus alternative in~$P$, with respect to~$C$; if such a consensus alternative does not exist, we write  $C(\prof{P})=\bot$. 
	
	
	
	\subsection{Majority Dynamics}
	
	Let $\sigma = (p_1, \dots, p_{\ell})$ be an ordering of pairs of alternatives, such that for any two alternatives $a$ and $b$, exactly one of $ab$ or $ba$ are in $\sigma$. We define the process of \emph{majority dynamics}~$\dyns{\sigma}$ such that $\dyns{\sigma}(\prof{P})$ is the stable profile 
	(denoted $\prof{P}^{|\sigma|}$ below) 
	that results from the application of the dynamics on the initial profile $\prof{P}$ where pairs are discussed following the update order $\sigma$ (first $p_1$, then $p_2$, \textit{etc}). Let $\prof{P}^0 := \prof{P}$. When the pair $p_t = ab$ is discussed, the preferences are updated from $\prof{P}^{t-1} = (\succ_i^{t - 1})_{i \in N}$ to $\prof{P}^{t} = (\succ_i^t)_{i \in N}$ such that:
	\begin{linenomath}
		\begin{equation*}
			\succ_i^{t}  = 
			\begin{cases}
				\succ^{t-1}_i  & \text{if $ab$ or $ba \in {\succ^{t-1}_i }$}
				\\
				\llbracket \succ^{t-1}_i \cup \{ab\} \rrbracket& \text{otherwise, and if } \support{ab} \geq \support{ba}\\
				\llbracket \succ^{t-1}_i \cup \{ba\} \rrbracket & \text{otherwise.}
			\end{cases}
		\end{equation*}
	\end{linenomath}
	
	\noindent Here, $\llbracket \succ \rrbracket$ is the transitive closure of $\succ$, ensuring we never violate transitivity. For example, an agent with preference set~$\{bc\}$ that adds~$ab$, will also have to add~$ac$.  This means that individual agents' hold (possibly incomplete) transitive preferences with no cycles at every step of the dynamics. The profile $ \dyns{\sigma}(\prof{P})$
	is thus a profile of complete preferences over the set of alternatives.  Note that we always break ties in favour of the first alternative in the considered pair.  
	
	\subsection{Effects on Consensus}
	
	All possible effects that $\dyn$ may have on group consensus are categorised into four types: $(i)$~preserving, $(ii)$~losing, or $(iii)$~generating consensus, and $(iv)$~preserving the absence of consensus. Moreover, given a profile of incomplete preferences, different update orders may lead to different consensus alternatives. With this in mind, effects $(i)$ to $(iii)$ further distinguish between the existence of \emph{any} consensus alternative, and the identity of a \emph{specific} consensus alternative.  
	
	Table~\ref{tab:effects} shows that five definitions below capture all non-trivial cases regarding effects on consensus, meaning that together they are expressively complete. We begin with definitions that refer to effects regarding \emph{all} possible update orders. 
	
	\begin{definition} \label{dfn:ex-preserving}
		Given a consensus notion~$C$, $\dyn$  \textbf{preserves $C$ existence}  if for all profiles~$\prof{P}$:
		\[C(\prof{P})\neq \bot \text{ implies that } C(\dyns{\sigma}(\prof{P}))\neq \bot, \text{ for all orders }\sigma.\] 

	\end{definition}
	
	\begin{definition} \label{dfn:id-preserving}
		Given a consensus notion~$C$, $\dyn$ \textbf{preserves $C$ identity} if for all profiles~$\prof{P}$ and alternatives~$a\in \alt$:
		\[C(\prof{P})= a \text{ implies that } C(\dyns{\sigma}(\prof{P}))=a, \text{ for all orders }\sigma.\] 
	\end{definition}
	
	\noindent We next weaken consensus preservation, by quantifying over \emph{some} order instead of \emph{all} orders. Positive control captures scenarios where $\dyn$ enables the mechanism designer to select a suitable update order to preserve consensus.

	\begin{definition} \label{dfn:ex-ccontrol}
		Given a consensus notion~$C$, $\dyn$ enables \textbf{positive $C$ existence control}  if  for all profiles~$\prof{P}$:
		\[C(\prof{P})\neq \bot \text{ implies } C(\dyns{\sigma}(\prof{P}))\neq \bot, \text{ for some order } \sigma.\]
	\end{definition}
	
	\begin{definition} \label{dfn:id-ccontrol}
		Given a consensus notion~$C$, $\dyn$ enables \textbf{positive $C$ identity control}  if  for all profiles~$\prof{P}$ and alt.~$a\in \alt$:
		\[C(\prof{P}) =a \text{ implies } C(\dyns{\sigma}(\prof{P}))=a, \text{ for some order } \sigma.\]
	\end{definition}
	
	\noindent While positive control enables the achievement of consensus, negative control prevents a specific consensus from forming. 
	This may be done either by imposing no consensus at the end of the dynamic process, or by inflicting different consensus alternatives, depending on the update order. 
	
	\begin{definition} \label{dfn:dcontrol}
		Given a consensus notion~C, $\dyn$ enables \textbf{negative C control}  if  for all profiles~$\prof{P}$ with $C(\prof{P})=\bot$, one of the following conditions hold:
		\begin{itemize}
			\item $C(\dyns{\sigma}(\prof{P}))=\bot$, for some order~$\sigma$;
			\item $ C(\dyns{\sigma}(\prof{P})) \neq C(\dyns{\sigma'}(\prof{P}))$, for some orders $\sigma,\sigma'$.
		\end{itemize}
	\end{definition}
	
	Table~\ref{tab:sum} summarises the results we will prove in Sections~\ref{sec:preserving} and~\ref{sec:controlling}---notably, we will see that issues regarding existence and identity coincide for our consensus notions.
	
	\begin{table}
		\centering
		\begin{tabular}{rccc}
			\toprule
			& Pres.\ consensus & Pos.\ control & Neg.\ control\\
			\toprule
			\CW & \xmark{} (\cmarkorange{}) & \cmark{}  & \cmark{} \\
			\plurundom & \xmark{}  & \xmark{} & \xmark{} \\
			\plurdom & \xmark{} & \xmark{} & \xmark{} \\
			\majundom & \xmark{} & \cmark{} & \xmark{} \\
			\majdom & \cmark{} & \cmark{} & \xmark{} \\
			\unanundom & \xmark{} (\cmarkorange{}) & \cmark{} & \cmark{} \\
			\unandom &  \cmark{} & \cmark{}  & \xmark{} \\
			\bottomrule
		\end{tabular}
		\caption{Summary of effects on consensus. `\cmark' means that the definition of the relevant effect holds for the given consensus notion, and `\xmark' means that it is violated. The results for strict weak orderings, when they differ from the general case, are shown in a parenthesis. }
		\label{tab:sum}
	\end{table}
	

	\section{Preserving Consensus} \label{sec:preserving}
	
	We examine whether $\dyn$ preserves consensus, starting with the notion of a Condorcet winner.
	For the special case of three alternatives, we can report some good news. 
	
	\begin{proposition}
		For $m = 3$, $\dyn$ preserves $\CW$~existence, but not $\CW$~identity.
	\end{proposition}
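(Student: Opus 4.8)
The plan is to treat the two claims separately, using the fact (noted just above the statement) that $\dyns{\sigma}(\prof{P})$ is always a complete profile. With $m=3$ every agent then holds a linear order, so the pairwise-majority relation of $\dyns{\sigma}(\prof{P})$ is a tournament on three vertices: it is either transitive---in which case its source is a Condorcet winner---or a single $3$-cycle. Hence $\CW$ existence is preserved exactly when no update order can turn a profile possessing a $\CW$ into one whose final majority tournament is a $3$-cycle, and I would prove the existence part by ruling this out.

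The key structural observation is that each unordered pair's majority direction in the final profile is fixed at the moment that pair is processed: undecided agents join the current majority (ties broken toward the first-listed alternative), agents already holding an opinion never revise it, and transitive closure only ever \emph{adds} comparisons consistent with existing ones---it never reverses one. Thus the final tournament is determined by the three directions adopted when each pair is discussed. Assume $a = \CW(\prof{P})$, so initially $\support{ab} > \support{ba}$ and $\support{ac} > \support{ca}$. I would then show that obtaining $c \succ a$ (respectively $b \succ a$) as a \emph{final} direction requires some agent to newly infer $c \succ a$ ($b \succ a$) by transitivity, and that the only discussions creating such an inference are: processing $\{b,c\}$ in direction $c \succ b$, or $\{a,b\}$ in direction $b \succ a$ (for a new $c \succ a$); and processing $\{a,c\}$ in direction $c \succ a$, or $\{b,c\}$ in direction $b \succ c$ (for a new $b \succ a$). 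This is a short transitive-closure computation on three elements.

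Ruling out the cycle is then immediate. A $3$-cycle has one of two orientations. For $a \to b \to c \to a$ (final $a \succ b$, $b \succ c$, $c \succ a$), the required $c \succ a$ forces the discussion of $\{b,c\}$ toward $c$, freezing the final $\{b,c\}$ direction as $c \succ b$ and contradicting $b \succ c$; or it forces $\{a,b\}$ toward $b$, freezing $b \succ a$ and contradicting $a \succ b$. The orientation $a \to c \to b \to a$ is symmetric. Hence the final tournament is acyclic and a $\CW$ exists. For the failure of identity preservation I would exhibit a small profile: two agents with $b \succ a$ only, three with $a \succ b$ only, and one with $a \succ c$ only, so that $a$ is the unique $\CW$ (here $\support{ab}=3>2=\support{ba}$ and $\support{ac}=1>0=\support{ca}$, while $\{b,c\}$ is tied). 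Discussing $\{b,c\}$ first (tie broken toward $c$), then $\{a,c\}$, then $\{a,b\}$, the two $b \succ a$ agents infer $c \succ a$, flipping that pair, and a direct trace shows the final profile makes $c$ (not $a$) the Condorcet winner.

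The main obstacle I anticipate is not the cyclicity argument but the exact-tie boundary case: when $|N|$ is even, a pair can become fully decided through transitivity before it is discussed and split evenly, yielding a zero final margin that could in principle prevent any alternative from \emph{strictly} beating both others. I would need a separate check that, starting from a $\CW$, such a tie cannot arise on a pair in a configuration that blocks $\CW$ existence---this parity bookkeeping is where the real care lies.
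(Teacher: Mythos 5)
Your route is genuinely different from the paper's. The paper fixes the update order $(ab,ac,bc)$, does a case analysis on which alternative is the Condorcet winner, and tracks the supports step by step; you instead observe that each pair's final majority direction is frozen the moment that pair is discussed, characterise the only transitive inferences available on three alternatives, and use this to rule out a $3$-cycle in the final majority tournament. This is cleaner and order-independent, and your explicit identity counterexample (which checks out) is more complete than the paper's, which merely asserts at the end of its case analysis that the identity can change. For odd $n$ your existence argument essentially closes: the final profile consists of linear orders, so no pairwise tie can occur, the final majority relation is a tournament, and your freezing argument correctly shows that flipping a pair against the initial winner forces an earlier pair to be frozen in a direction contradicting the assumed cycle.

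However, the tie case you flag at the end is not a bookkeeping detail that a ``separate check'' can dispose of --- it is a genuine failure of the statement for even $n$. Take four agents with initial preferences $\{cb\}$, $\{bc,ca,ba\}$ (i.e., $b\succ c\succ a$), $\{ba,ac,bc\}$ (i.e., $b\succ a\succ c$), and $\{ab\}$. Then $b$ is the Condorcet winner: $\support{ba}=2>1=\support{ab}$ and $\support{bc}=2>1=\support{cb}$, with $\{a,c\}$ tied $1$--$1$. Under $\sigma=(ab,ac,bc)$, at $t=1$ agent $1$ adopts $ba$ and infers $ca$; at $t=2$ the count on $\{a,c\}$ is now $2$--$1$ for $c$, so agent $4$ adopts $ca$ and infers $cb$; at $t=3$ every agent is already decided on $\{b,c\}$ with a $2$--$2$ split, so no update occurs and neither $b$ nor $c$ strictly beats the other: no Condorcet winner exists in $\dyns{\sigma}(\prof{P})$. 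The paper's own proof has exactly the same unflagged hole --- the step asserting that $b$ remains the Condorcet winner whenever $\supportz{2}{bc}\ge\supportz{2}{cb}$ fails on an exact tie with no undecided agents remaining --- and the experimental section restricts to odd $n$ precisely ``to avoid effects that rely solely on ties''. So your instinct about where the real care lies was correct, but the parity bookkeeping cannot be completed: for odd $n$ your argument proves the claim, and for even $n$ the claim is false.
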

	
	\begin{proof}
		Let $A = \{a,b,c\}$, and let~$\sigma = (ab,ac,bc)$ be the update order (for different orders the proof is analogous). Consider any profile $\prof{P}^0$. If alternative~$a$ is a Condorcet winner in $\prof{P}^0$ we know that the support of $ba$ or $ca$ will not increase throughout $\dyns{\sigma}$, and $a$ will remain Condorcet winner in $\dyns{\sigma}(\prof{P}^0)$. Thus, we only need to consider alternative~$b$ or~$c$ being the Condorcet winner. As the proofs proceed similarly, we only consider one of the two cases.
		
		Suppose alternative~$b$ is the Condorcet winner in $\prof{P}^0$. When the pair~$ab$ is discussed at time~$t = 1$, the agents who update their preference will support $b$ over $a$, and transitivity requirements can lead them to support $b$ over $c$, but never $c$ over $b$.
		
		Now, if $\supportz{1}{ac} \geq \supportz{1}{ca}$, then, the support of $cb$ will not increase at time $t = 2$. We thus have~$\supportz{2}{cb} = \supportz{0}{cb} < \supportz{0}{bc} \leq \supportz{2}{bc}$. Hence, after time $t = 3$, $bc$ will have more support than $cb$ and $b$ will still be the Condorcet winner in $\dyns{\sigma}(\prof{P}^0)$. 
		
		Suppose instead that $\supportz{1}{ac} < \supportz{1}{ca}$. Then, at time $t = 2$, the support of $ca$ will increase, and potentially also that of $cb$. At time $t = ¨3$, if $\supportz{2}{bc} \geq \supportz{2}{cb}$, $b$ will remain the Condorcet winner in $\dyns{\sigma}(\prof{P}^0)$. If instead $\supportz{2}{bc} < \supportz{2}{cb}$, then $c$ will be the Condorcet winner in $\dyns{\sigma}(\prof{P}^0)$, as a majority of agents supported $c$ over $a$ at time~$t=2$. Note that this results in a change of the identity of the Condorcet winner. 
	\end{proof}
	\noindent However,  the situation is not as bright when $m>3$.
	
	\begin{proposition}
		For $m > 3$,  $\dyn$ does not preserve $\CW$~existence (thus neither $\CW$~identity).
	\end{proposition}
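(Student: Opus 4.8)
The plan is to exhibit a counterexample: a four-alternative profile $\prof{P}$ over $\alt = \{a,b,c,d\}$ that has a Condorcet winner, together with an update order $\sigma$ for which $\dyns{\sigma}(\prof{P})$ has none. Since $\dyns{\sigma}(\prof{P})$ is complete, the cleanest way to guarantee the absence of a Condorcet winner is to force its majority tournament to contain a $3$-cycle, say $b$ beating $a$, $a$ beating $c$, and $c$ beating $b$, while $d$ loses to all three; then no alternative beats every other, so $\CW(\dyns{\sigma}(\prof{P}))=\bot$. The very same instance also settles identity: if $\CW(\prof{P})=a$ while $\CW(\dyns{\sigma}(\prof{P}))=\bot\neq a$, then identity preservation (Def.~\ref{dfn:id-preserving}) already fails, which is why existence non-preservation entails identity non-preservation.

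First I would design $\prof{P}$ so that $a$ is the Condorcet winner, beating $b$, $c$, and $d$ pairwise, but with a razor-thin margin over $b$, i.e.\ $\support{ab}=\support{ba}+1$, and I would include a block of agents who are \emph{undecided} on the pair $\{a,b\}$. The crucial ingredient is that these undecided agents already rank some third alternative above $a$ (for instance they hold $d\succ a$), so that once they are pushed to adopt a majority opinion on an auxiliary pair, the transitive closure will drag $b$ above $a$ in their preference. This is precisely the extra degree of freedom supplied by a fourth alternative and unavailable when $m=3$.

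Then I would choose $\sigma$ to discuss the auxiliary pairs before $ab$. Concretely, discussing $bd$ first should make the majority adopt $b\succ d$; the undecided agents who hold $d\succ a$ then acquire $b\succ a$ through the transitive closure $\llbracket\,\cdot\,\rrbracket$, so that by the time $ab$ is reached we have $\supportz{t}{ba}>\supportz{t}{ab}$ and the comparison flips. Arranging the remaining comparisons (and the behaviour of $c$ and $d$) so that the surviving majorities are exactly the intended $3$-cycle, and then carrying out the routine verification that $a$ is genuinely the Condorcet winner in $\prof{P}$ and that $\dyns{\sigma}(\prof{P})$ realises the cycle, completes the construction.

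The main obstacle is global bookkeeping. I must keep $a$ strictly ahead of every rival at the start while simultaneously seeding enough undecided-yet-transitively-vulnerable agents for a single order to flip \emph{precisely} the $a$-versus-$b$ comparison; the margins have to be thin enough to flip but consistent with $a$ being the Condorcet winner initially. Tracking the transitive closure at every agent after each discussed pair, and confirming that the cascade does not inadvertently restore a Condorcet winner (e.g.\ by also pushing $b$ above $c$ and thereby crowning $b$ as a new Condorcet winner instead of producing the cycle), is where the real work lies.
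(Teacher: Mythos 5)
Your strategy coincides with the paper's: exhibit a profile with a Condorcet winner together with an update order under which discussing an auxiliary pair early drags, via the transitive closure $\llbracket\cdot\rrbracket$, a rival above the Condorcet winner, so that the final complete profile has a cyclic majority tournament and hence no Condorcet winner; and you are right that a single such instance refutes identity preservation as well. The problem is that the proposal never actually produces the witness. Everything from ``First I would design $\prof{P}$ so that\dots'' onwards is conditional, and you yourself flag that the ``global bookkeeping''---verifying that $a$ is genuinely the initial Condorcet winner, that exactly the $ab$ comparison flips, and that the cascade does not crown a new Condorcet winner---is where the real work lies. For a non-preservation statement the concrete profile and order \emph{are} the proof, so as written this is a plan for a proof rather than a proof: a reader cannot check that your thin-margin, block-of-undecided-agents construction is simultaneously satisfiable.

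For what it is worth, the paper's witness is far smaller than your plan suggests: three agents over $\{a,b,c,w\}$ with $w$ the Condorcet winner. One agent is undecided between $b$ and $c$ but holds $c \succ w$; discussing $bc$ first makes her adopt $b \succ c$ and hence, by transitive closure, $b \succ w$. Discussing $bw$ next then produces a tie on that pair, which the dynamics break in favour of the first element $b$, so a second agent also adopts $b \succ w$ and $w$ is dethroned, while $a$ still beats $b$ and $w$ still beats $a$ by majority---a cycle, hence no Condorcet winner. No razor-thin margins or blocks of agents are needed, and note that the flip is achieved through a tie-break rather than a strict reversal of the $bw$ majority. To complete your argument you should write down one explicit profile and trace the updates pair by pair.
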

	\begin{proof}
		Consider the $3$-agent profile $\prof{P}$ described below (ignoring the dotted edges) where $w$ is the Condorcet winner.
		\begin{center}
			\scalebox{0.9}{
				\begin{tikzpicture}
					\node (agent1) at (0,0) {
						\begin{tikzpicture}
							\node (b) at (-0.5, 0) {$b$};
							\node (w) at (0.5, 0) {$w$};
							\node (c) at (0, -0.9) {$c$};
							\node (a) at (0, -1.8) {$a$};
							
							\path[-stealth] (b) edge (c);
							\path[-stealth] (w) edge (c);
							\path[-stealth] (c) edge (a);
							\path[-stealth,dotted] (b) edge (w);
					\end{tikzpicture}};
					
					\node (agent2) at (2,0) {
						\begin{tikzpicture}
							\node (a) at (0, 0) {$a$};
							\node (b) at (-0.5, -0.9) {$b$};
							\node (c) at (0.5, -0.9) {$c$};
							\node (w) at (0, -1.8) {$w$};
							
							\path[-stealth] (a) edge (b);
							\path[-stealth] (a) edge (c);
							\path[-stealth, dotted] (b) edge (w);
							\path[-stealth] (c) edge (w);
							\path[-stealth,dotted] (b) edge (c);
					\end{tikzpicture}};
					\node (agent3) at (4,0) {
						\begin{tikzpicture}
							\node (w) at (0, 0) {$w$};
							\node (a) at (0, -0.6) {$a$};
							\node (b) at (0, -1.2) {$b$};
							\node (c) at (0, -1.8) {$c$};
							
							\path[-stealth] (w) edge (a);
							\path[-stealth] (a) edge (b);
							\path[-stealth] (b) edge (c);
					\end{tikzpicture}};
				\end{tikzpicture}
			}
		\end{center}
		\noindent Let $\sigma$ be any update order starting with $bc$ and $bw$. $\dyns{\sigma}$ on this profile results in the inclusion of the dotted edges. Note that 
		$w$ is no longer a Condorcet winner. In fact, one can see that 
		no Condorcet winner exists in $\dyns{\sigma}(\prof{P})$. 	
	\end{proof}
	
	When it comes to undominated alternatives and $\plurdom$, we find that $\dyn$ violates consensus preservation even for $m = 3$. The opposite holds for $\unandom$ and $\majdom$.
	
	\begin{proposition}\label{prop:undomnotpreserved}
		Take $C \in\{$ `\plurundom', `\majundom', `\unanundom', `\plurdom'$\}$. Then $\dyn$ does not  preserve $C$ existence (thus neither identity).
	\end{proposition}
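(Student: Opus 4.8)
The plan is to refute existence-preservation separately for each of the four notions by exhibiting, for each, a three-agent three-alternative profile $\prof{P}$ with a \emph{unique} consensus alternative together with an update order $\sigma$ for which $\dyns{\sigma}(\prof{P})$ has no consensus at all. Two observations drive every construction. First, since $\dyns{\sigma}(\prof{P})$ is complete, in it ``undominated'', ``dominant'', and ``top'' coincide, so each counting-based notion becomes a property of the multiset of top alternatives; in particular, if $\dyns{\sigma}(\prof{P})$ is the Condorcet cycle $a \succ b \succ c$, $b \succ c \succ a$, $c \succ a \succ b$, then every alternative tops exactly one agent, so there is no majority top and a three-way tie for the plurality top---hence no \majundom, \plurundom, or \plurdom\ winner. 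Second, the only lever for dethroning the initial consensus is the tie-breaking rule: the first dominator of a currently undominated alternative $a$ cannot be produced by transitive closure (a new edge into $a$ would need a chain ending in a pre-existing dominator of $a$), so it must come from a direct decision on a pair $\{a,x\}$; and since no agent yet ranks $x \succ a$, adding $x \succ a$ requires the current supports of $ax$ and $xa$ to both be zero, i.e.\ a perfect tie broken in the adverse orientation.

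For \majundom\ and \plurundom\ I would take agent~$1$ with $a \succ b \succ c$, agent~$2$ with $b \succ c \succ a$, and agent~$3$ with $a \succ b$ and $c \succ b$ but $a,c$ incomparable. The undominated counts are $2,1,1$, so $a$ is the unique \majundom\ and \plurundom\ winner. The sole undecided comparison is $\{a,c\}$, whose support stays tied $1$--$1$; listing it as $ca$ completes agent~$3$ to $c \succ a \succ b$, producing the Condorcet cycle above and destroying both consensus types. For \plurdom\ the same target cycle works, but I must instead secure a \emph{unique dominant} alternative initially: take agent~$1$ with $a \succ b \succ c$ (so $a$ is dominant) and agents~$2$ and~$3$ each with two incomparable undominated alternatives and \emph{no} dominant alternative (namely $b \succ a$, $c \succ a$ with $b,c$ incomparable; and $a \succ b$, $c \succ b$ with $a,c$ incomparable). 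Then $a$ is the unique \plurdom\ winner, the undecided comparisons $\{b,c\}$ and $\{a,c\}$ are both tied, and breaking them as $bc$ then $ca$ again yields the Condorcet cycle, now with a three-way tie for the dominant (=top) alternative.

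For \unanundom\ destruction is easier, since it only requires the consensus alternative to become dominated for a single agent. I would take $a$ incomparable to everything for all agents, with two agents ranking $b \succ c$ and one ranking $c \succ b$; then $a$ is the unique \unanundom\ alternative. Discussing $\{a,b\}$ first, listed as $ba$, is a perfect tie and forces $b \succ a$ onto every agent, so $a$ is no longer undominated anywhere; completing the profile leaves the tops split between $b$ and $c$, so no alternative is undominated for all agents and \unanundom\ consensus is gone.

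The delicate point---and the reason a single profile cannot serve all four notions---is reconciling the two requirements: the initial profile must simultaneously certify $a$ as the \emph{unique} winner under a counting definition and keep the decisive comparison balanced enough to be flipped by the tie-break. These pull against each other, and across notions they conflict outright: forcing a unique dominant alternative (for \plurdom) requires suppressing the other agents' dominance, which raises a second alternative's undominated count above the majority threshold and thereby breaks \majundom/\plurundom\ uniqueness. The main thing to verify in each case is that the chosen order keeps the pivotal comparisons tied until the moment they are decided, so that the tie-break---rather than an unwanted strict majority---dictates the outcome.
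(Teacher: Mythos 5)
Your proposal is correct and takes essentially the same approach as the paper: explicit small counterexample profiles for each consensus notion, paired with an update order whose first discussed pairs destroy uniqueness of the consensus alternative (the paper uses a 5-agent profile for \plurundom/\majundom, a 4-agent one for \plurdom, and a 3-agent one for \unanundom, while yours are all 3-agent and lean more systematically on the tie-breaking rule to steer the profile into a Condorcet cycle). I verified each of your constructions; they all work under the paper's convention that ties are broken in favour of the first alternative of the pair as listed in $\sigma$.
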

	
	\begin{proof}[Partial proof.]
		We start with $\plurundom$ and $\majundom$. Consider the profile below (ignoring the dotted edges) where $a$ is the plurality and majority undominated consensus. 
		
		\begin{center}
			\scalebox{0.9}{
				\begin{tikzpicture}
					\node (agent1) {
						\begin{tikzpicture}
							\node (a) at (-0.5, 0) {$a$};
							\node (b) at (0.5,0) {$b$};
							\node (c) at (0, -0.7) {$c$};
							
							\path[-stealth] (a) edge (c);
							\path[-stealth] (b) edge (c);
							\path[-stealth, dotted] (b) edge (a);
					\end{tikzpicture}};
					
					\node[right = 1em of agent1] (agent2) {
						\begin{tikzpicture}
							\node (a) at (-0.5, 0) {$a$};
							\node (c) at (0.5,0) {$c$};
							\node (b) at (0, -0.7) {$b$};
							
							\path[-stealth] (a) edge (b);
							\path[-stealth] (c) edge (b);
							\path[-stealth, dotted] (c) edge (a);
					\end{tikzpicture}};
					
					\node[right = 1em of agent2] (agent3) {
						\begin{tikzpicture}
							\node (a) at (0, -0.6) {$a$};
							\node (b) at (0, -1.2) {$b$};
							\node (c) at (0, -1.8) {$c$};

							\path[-stealth] (a) edge (b);
							\path[-stealth] (b) edge (c);
					\end{tikzpicture}};

					\node[right = 1em of agent3] (agent4) {
						\begin{tikzpicture}
							\node (b) at (0, -0.6) {$b$};
							\node (c) at (0, -1.2) {$c$};
							\node (a) at (0, -1.8) {$a$};

							\path[-stealth] (b) edge (c);
							\path[-stealth] (c) edge (a);
					\end{tikzpicture}};
					
					\node[right = 1em of agent4] (agent5) {
						\begin{tikzpicture}
							\node (b) at (0, -0.6) {$c$};
							\node (c) at (0, -1.2) {$b$};
							\node (a) at (0, -1.8) {$a$};
							
							\path[-stealth] (b) edge (c);
							\path[-stealth] (c) edge (a);
					\end{tikzpicture}};
				\end{tikzpicture}
			}
		\end{center}
		
		\noindent Let $\sigma$ be an order that starts with the sequence $(ba, ca)$. As the majority of agents prefer both $b$ and $c$ to $a$, we will end up including the dotted edges. Overall, both $b$ and $c$ will be dominant in exactly two individual rankings.
		
		Let us now turn to $\plurdom$. Consider the following profile (ignoring the dotted edges) where $a$ is the unique plurality dominant alternative.
		
		\begin{center}
			\begin{tikzpicture}	
				
				\node (agent1) at (0,0) {
					\begin{tikzpicture}
						\node (a) at (0, 0) {$a$};
						\node (b) at (0,-0.7) {$b$};
						\node (c) at (0, -1.4) {$c$};
						
						\path[-stealth] (a) edge (b);
						\path[-stealth] (b) edge (c);
				\end{tikzpicture}};
				
				\node (agent2) at (2,0) {
					\begin{tikzpicture}
						\node (a) at (0, 0) {$a$};
						\node (b) at (0,-0.7) {$b$};
						\node (c) at (0, -1.4) {$c$};
						
						\path[-stealth] (a) edge (b);
						\path[-stealth] (b) edge (c);
				\end{tikzpicture}};
				\node (agent3) at (4,0) {
					\begin{tikzpicture}
						\node (b) at (-0.4, 0) {$b$};
						\node (a) at (0,-0.6) {$a$};
						\node (c) at (0.4, 0) {$c$};
						
						\path[-stealth] (b) edge (a);
						\path[-stealth] (c) edge (a);
						\path[-stealth,dotted] (b) edge (c);
				\end{tikzpicture}};
				\node (agent4) at (6,0) {
					\begin{tikzpicture}
						\node (b) at (-0.4, 0) {$b$};
						\node (a) at (0,-0.6) {$a$};
						\node (c) at (0.4, 0) {$c$};
						
						\path[-stealth] (b) edge (a);
						\path[-stealth] (c) edge (a);
						\path[-stealth,dotted] (b) edge (c);
				\end{tikzpicture}};
				
			\end{tikzpicture}
		\end{center}
		
		Let the update order~$\sigma$ be a sequence starting with the pair $bc$. This will result in the inclusion of the dotted edges in the two last agents' preferences, and will leave us with a profile without a unique plurality dominant alternative.

		Finally, for $\unanundom$ consider the following profile (again, ignoring the dotted edges) where $a$ is the unique unanimity undominated alternative---as it is the only alternative with no incoming edges.

		\begin{center}
			\begin{tikzpicture}	
				
				\node (agent1) at (0,0) {
					\begin{tikzpicture}[anchor=south]
						\node (c) at (0, 0) {$c$};
						\node (b) at (0,-0.7) {$b$};
						\node (a) at (0.7, -0.7) {$a$};
						
						\path[-stealth] (c) edge (b);
						\path[-stealth,dotted] (b) edge (a);
				\end{tikzpicture}};
				
				\node (agent2) at (2,0) {
					\begin{tikzpicture}[anchor=south]
						\node (c) at (0, 0) {$c$};
						\node (b) at (0,-0.7) {$b$};
						\node (a) at (0.7, -0.7) {$a$};
						
						\path[-stealth] (c) edge (b);
						\path[-stealth,dotted] (b) edge (a);
				\end{tikzpicture}};
				
				\node (agent3) at (4,0) {
					\begin{tikzpicture}
						\node (a) at (0, 0) {$a$};
						\node (c) at (0,-0.7) {$c$};
						\node (b) at (0.7, 0) {$b$};
						
						\path[-stealth] (a) edge (c);
						\path[-stealth,dotted] (b) edge (a);
				\end{tikzpicture}};
				
			\end{tikzpicture}
		\end{center}
		
		For any update order $\sigma$ starting with $ab$, we include the dotted edges. So, we end up in a profile without a unanimity undominated alternative as $a$ will be the dominant alternative in the preferences of the last agent, while $c$ will be the dominant alternative in the first two agents' preferences. 
	\end{proof}

	\begin{proposition} \label{prop:dom-preserve}
		Take $C \in\{$`\unandom', `\majdom'$\}$. Then $\dyn$ preserves $C$ identity (thus also existence). 
	\end{proposition}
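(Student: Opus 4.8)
The plan is to exploit the one structural feature of $\dyn$ that these two notions depend on: the process is \emph{monotone}. By inspection of the update rule, each step replaces $\succ_i^{t-1}$ either by itself or by $\llbracket \succ_i^{t-1} \cup \{xy\}\rrbracket$, and since the transitive closure of a superset contains the original relation, we have $\succ_i^{t-1} \subseteq \succ_i^{t}$ for every agent $i$ and every step $t$; no ordered pair $uv$ is ever deleted. First I would record the only consequence that matters here: if $a$ is dominant in $\succ_i$ at some stage---that is, $a \succ_i b$ for all $b \in \alt \setminus \{a\}$---then all of these outgoing pairs survive to every later stage, so $a$ stays dominant in $\succ_i^{t'}$ for all $t' \geq t$, in particular in the final complete profile $\dyns{\sigma}(\prof{P})$. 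Dominance, unlike being undominated, is certified purely by the \emph{presence} of edges out of $a$, so monotonicity preserves it for free.

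Second, I would dispose of uniqueness, which for these two notions is automatic and so need not be tracked through the dynamics at all. At most one alternative can be unanimity dominant in any profile: two distinct unanimity-dominant alternatives $a, a'$ would give $a \succ_i a'$ and $a' \succ_i a$ for each $i$, contradicting antisymmetry. Likewise at most one alternative can be majority dominant: if $a$ and $a'$ were each dominant for strictly more than $n/2$ agents, the two majorities would intersect (their sizes sum to more than $n$), and any agent in the intersection would rank both $a$ above $a'$ and $a'$ above $a$---impossible. Hence $C(\prof{P}) = a$ is equivalent to simply ``$a$ is (unanimity / majority) dominant in $\prof{P}$'', with uniqueness granted.

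Finally I would combine the two observations for each notion. For \unandom, assuming $C(\prof{P}) = a$, the alternative $a$ is dominant in $\succ_i$ for every $i \in N$; by persistence it is dominant in every agent's ranking of $\dyns{\sigma}(\prof{P})$, hence unanimity dominant there, and by the uniqueness observation it is the unique such alternative, giving $C(\dyns{\sigma}(\prof{P})) = a$. For \majdom, I would let $M = \{ i \in N \mid a \text{ is dominant in } \succ_i\}$ with $|M| > n/2$; persistence keeps $a$ dominant for each $i \in M$, so $a$ is dominant for at least $|M| > n/2$ agents in $\dyns{\sigma}(\prof{P})$ and is thus majority dominant, again uniquely. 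Since the argument uses nothing about the particular order, it holds for all $\sigma$, which is exactly identity preservation (existence then follows a fortiori). The only point needing care---and the closest thing to an obstacle---is the first paragraph's claim that dominance is genuinely edge-certified and hence immune to the transitive closures performed along the way; once that is nailed down, no case analysis on the update order is required, in sharp contrast to the Condorcet-winner arguments above.
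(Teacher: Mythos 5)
Your proof is correct, but it runs on a different engine than the paper's. The paper's sketch argues about how the dynamics treats pairs involving the consensus alternative $a$: because $a$ is (unanimity or majority) dominant, the support for $ab$ exceeds that for $ba$ for every $b$, so every direct update on such a pair goes in $a$'s favour, and no transitive-closure step can introduce $b \succ_i a$ either; hence the set of agents for whom $a$ is dominated never grows. You instead ignore the content of the updates entirely and use only the \emph{monotonicity} of the process --- no ordered pair is ever deleted, so dominance, being certified by the presence of the edges $ab$ for all $b$, persists for exactly the agents who started with it; uniqueness is then automatic from antisymmetry. Your route is more elementary and strictly more general: it would establish preservation of \unandom{} and \majdom{} for \emph{any} monotone completion dynamics, with no reference to majorities at all. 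What the paper's argument buys in exchange is a slightly stronger conclusion --- the set of agents for whom $a$ ends up dominant is the (possibly larger) set of agents for whom $a$ was initially merely undominated, since no agent ever acquires an edge into $a$ --- but that surplus is not needed for the proposition. Both arguments rely on the same background facts (updates preserve acyclicity and transitivity, so the final preferences are antisymmetric), and both are complete for the claim as stated.
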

	
	\begin{proof}[Proof sketch.]
		Suppose $a \in \alt$ is the consensus alternative according to \unandom{} or \majdom{}. Then, any update on a pair $ab$ or $ba$, for any $b \in \alt \setminus \{a\}$, will be made in favour of $a$.
		For the same reason, no $b \in \alt\setminus \{a\}$ can be preferred to $a$ because of transitivity.
		The consensus will thus remain.
	\end{proof}
	
	
	
	\paragraph{Quality of Consensus.} In cases where consensus is preserved, but the identity of the consensus alternative is not, we explore how ``bad'' the new consensus can be. Unfortunately, the answer is not very positive. 
	
	Before diving in, we need the following definition: An alternative $a$ is a \emph{Condorcet loser}  in $\prof{P}$ if $\support{ab} < \support{ba}$ for all $b \in A\setminus \{a\}$.

	\begin{proposition}\label{prop:loserwins}
		For 
		$m \geq 5$,  $\dyn$ can turn a Condorcet loser into a Condorcet winner. 
	\end{proposition}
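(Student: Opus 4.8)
The plan is to prove the statement by explicit construction: I will exhibit, for $m = 5$, a profile $\prof{P}$ in which $a$ is a Condorcet loser together with an update order $\sigma$ such that $a$ is a Condorcet winner in $\dyns{\sigma}(\prof{P})$. The engine of the construction is transitivity: an undecided agent who already holds $a \succ w$ and then adopts $w \succ x$ (because the majority favours it when the pair $wx$ is discussed) is forced to add $a \succ x$ as well. Thus, by planting a few ``seed'' edges of the form $a \succ w$ low in some agents' orders and then discussing the non-$a$ pairs in a chosen order, I can make the support $\support{ax}$ grow even though it started below $\support{xa}$. The idea is to process the pairs so that $a$'s reach is extended one alternative at a time: each pair $ax$ is discussed only once a transitive cascade has raised the current $\support{ax}$ to at least $\support{xa}$, at which point every still-undecided agent converts to $a \succ x$ and the win is locked in; a locked-in $a \succ x$ then serves as a new seed for the next step.

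The initial profile must meet two opposing requirements. For $a$ to start as a Condorcet loser I need $\support{ax} < \support{xa}$ for every $x \in \{b,c,d,e\}$, so the seed edges $a \succ w$ must be few and spread across different comparisons (at most a minority on any single pair), and every seed must be placed at the bottom of its agent's order so that transitivity does not already create extra $a$-edges that would spoil the loser property. For $a$ to end as a Condorcet winner I need, for every $x$, a strict majority to hold $a \succ x$ after the dynamics, which is arranged by the bootstrapping order above. I would first verify the loser property directly from the support counts of $\prof{P}$, then trace $\dyns{\sigma}$ step by step, checking at each discussed pair which agents are opinionated and in which direction, and confirming that on the $a$-pairs the opinionated majority (with ties broken toward $a$, as the update rule prescribes) always favours $a$.

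The main obstacle, and the reason $m \geq 5$ is needed, is a genuine impossibility at the ``top'': if some alternative $t$ ever sits above $a$ for all agents, then $a$ can never overtake it, since the only way to gain $a \succ t$ is a transitive step $a \succ w \succ t$, which is impossible when nothing lies above $t$; and because $a$ starts as a loser, the opinionated support on $at$ can only remain against $a$. Hence the construction must guarantee that no non-$a$ alternative is ever ranked above $a$ by a majority---equivalently, the four non-$a$ alternatives must carry a diversified, non-uniform (essentially cyclic) structure so that each of them is kept above $a$ by only a minority while still forming a legitimate Condorcet-loser profile. Reconciling this cyclic spreading with the loser inequalities, and ordering the pairs so that each cascade fires before the corresponding $a$-pair is discussed, is the delicate part; once such a profile and order are fixed, the rest is bookkeeping of the supports through $\dyns{\sigma}$.
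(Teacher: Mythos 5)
Your plan identifies the right mechanism---seed edges of the form $\text{loser} \succ y$ sitting in a component of an agent's partial order, transitive cascades triggered by first discussing pairs \emph{not} involving the loser, and only then discussing the loser's own pairs once the opinionated majority has flipped---and this is exactly the mechanism the paper's proof uses. However, the proposal never actually produces a profile or an update order: you describe the constraints a witness must satisfy, correctly flag that reconciling the Condorcet-loser inequalities with the cascade requirements is ``the delicate part,'' and then stop before doing it. For a proposition whose entire content is the existence of a counterexample, this is a genuine gap, not bookkeeping; the tension you yourself point out (every seed must stay a strict minority on every pair, yet the cascades must eventually flip every one of the $m-1$ contests involving the loser) is precisely what makes the existence of such a profile non-obvious, and it is only resolved by exhibiting and checking a concrete instance.

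For comparison, the paper resolves it with a $7$-agent profile over $\{w,\ell,a,b,x\}$ in which $\ell$ is the Condorcet loser and $w$ the Condorcet winner, and an order beginning $(ax,\,bx,\,\ell w)$. Two agents hold isolated seeds $\ell\succ a$ and $\ell\succ b$ together with a chain $x\succ w$; adopting $a\succ x$ and $b\succ x$ in the first two rounds cascades to $\ell\succ w$ for both, so that when $\ell w$ is discussed the opinionated count is $2$ against $1$ and four further agents adopt $\ell\succ w$. The useful simplification relative to your staged ``one alternative at a time'' plan is that no further bootstrapping is needed: the agents who adopt $\ell\succ w$ have $w$ at the top of a complete chain, so this single adopted edge transitively places $\ell$ above \emph{every} alternative in their rankings at once, and the resulting margins ($4$ out of $7$ on each pair $\ell y$) are already unassailable regardless of how the remaining pairs are resolved. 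If you want to complete your own version, you should either carry out your iterative scheme with explicit counts or adopt this one-shot variant, but in either case the profile, the order, and the verification of both the initial loser property and the final winner property must appear in the proof.
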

	
	\begin{proof}
		Let $A = \{w, \ell, a, b, x\}$, and consider the 7-agent profile~$\prof{P^0}$ below (ignoring the dotted edges), where $w$ is the Condorcet winner and $\ell$ is the Condorcet loser.
		\begin{center}
			\scalebox{1}{
				\begin{tikzpicture}
					\node (agent1) at (0,0) {
						\begin{tikzpicture}
							\node (l) at (0, 0) {$\ell$};
							\node (a) at (0, -0.7) {$a$};
							\node (x) at (0.8, -0.7) {$x$};
							\node (w) at (0.8, -1.4) {$w$};
							\node (b) at (0.8, -2.1) {$b$};
							
							\path[-stealth] (l) edge (a);
							\path[-stealth] (x) edge (w);
							\path[-stealth] (w) edge (b);
							
							\path[dotted,  -stealth] (0.2, -0.7) edge (0.6, -0.7);
					\end{tikzpicture}};
					\node (label1) at (0,-1.5) {1 agent}; 
					
					\node (agent2) at (1.7,0) {
						\begin{tikzpicture}
							\node (l) at (0, 0) {$\ell$};
							\node (b) at (0, -0.7) {$b$};
							\node (x) at (0.8, -0.7) {$x$};
							\node (w) at (0.8, -1.4) {$w$};
							\node (a) at (0.8, -2.1) {$a$};
							
							\path[-stealth] (l) edge (b);
							\path[-stealth] (x) edge (w);
							\path[-stealth] (w) edge (a);
							
							\path[dotted,  -stealth] (0.2, -0.7) edge (0.6, -0.7);
					\end{tikzpicture}};
					\node (label2) at (1.7,-1.5) {1 agent}; 	
					
					\node (agent3) at (3.4,0) {
						\begin{tikzpicture}
							\node (w) at (0, 0) {$w$};
							\node (a) at (0, -0.7) {$a$};
							\node (b) at (0, -1.4) {$b$};
							\node (x) at (0, -2.1) {$x$};
							
							\node (l) at (1, 0) {$\ell$};
							\path[-stealth] (w) edge (a);
							\path[-stealth] (a) edge (b);
							\path[-stealth] (b) edge (x);

							\path[dotted, -stealth] (l) edge (w);
					\end{tikzpicture}};
					\node (label2) at (3.4,-1.5) {2 agents}; 	
					
					\node (agent3) at (5.1,0) {
						\begin{tikzpicture}
							\node (w) at (0, 0) {$w$};
							\node (a) at (0, -0.7) {$a$};
							\node (b) at (0, -1.4) {$b$};
							\node (x) at (0, -2.1) {$x$};
							
							\node (l) at (1, 0) {$\ell$};
							\path[-stealth] (w) edge (a);
							\path[-stealth] (a) edge (b);
							\path[-stealth] (b) edge (x);

							\path[-stealth] (w) edge (l);
					\end{tikzpicture}};
					\node (label2) at (5.1,-1.5) {1 agent};

					\node (agent4) at (6.8,0) {
						\begin{tikzpicture}
							\node (a) at (0, 0) {$a$};
							\node (b) at (0, -0.7) {$b$};
							\node (x) at (0, -1.4) {$x$};
							\node (l) at (0, -2.1) {$\ell$};
							
							\node (w) at (1, -2.1) {$w$};
							\path[-stealth] (a) edge (b);
							\path[-stealth] (b) edge (x);
							\path[-stealth] (x) edge (l);

							\path[dotted, -stealth] (l) edge (w);
					\end{tikzpicture}};
					\node (label2) at (6.8,-1.5) {2 agents}; 	
					
					%
				\end{tikzpicture}
			}
		\end{center}
		\noindent Consider what happens if the update order $\sigma$ starts with $ax, bx, \ell w$. After time~$t=2$, agents have updated their preferences on $ax$ and $bx$ and $\ell$ is preferred to $w$ by a majority. Updating on $\ell w$ at time~$t=3$ makes $\ell$ the Condorcet winner, and that cannot change. The updated preferences after time $t = 3$ are represented by the dotted edges. 		
	\end{proof}
	
	
	A similar proposition holds for \plurdom{} and $\plurundom$. 

	\begin{proposition}\label{prop:plurdomloses}
		For 
		$m \geq 8$, $\dyn$ can turn a plurality dominated (or undominated) alternative into a plurality dominant one. 
	\end{proposition}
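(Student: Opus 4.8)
The plan is to exhibit, for $m = 8$, an explicit profile $\prof{P}^0$ and update order $\sigma$ witnessing the claim, with any further alternatives added as dummies placed in a fixed interior position across all agents so that they never sit at the top or bottom of anyone and hence never affect the relevant plurality counts. The design is dictated by one structural observation: $\dyns{\sigma}$ only ever \emph{adds} edges (each update takes a transitive closure of a superset), so once an agent ranks $x \succ \ell$ this can never be undone. Consequently the agents in which $\ell$ starts at the bottom are frozen with respect to $\ell$ and can never make $\ell$ dominant, whereas the agents in which $\ell$ is to become dominant must start with $\ell$ incomparable to (not below) everything it will climb over. This splits the electorate into a \emph{bottom group} $B$ certifying $\ell$ as the plurality-dominated consensus, a \emph{climbing group} $G$ in which $\ell$ will be driven to the top, and a small \emph{seed group} $H$ supplying the initial majority for $\ell$.

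Concretely, I would let every agent in $B$ report a complete order with $\ell$ last (so $B$ contributes permanently to $\support{x\ell}$ against $\ell$ for \emph{every} $x$); let every agent in $G$ rank a common alternative $x_1$ uniquely on top with $\ell$ incomparable to all alternatives; and let every agent in $H$ rank $\ell$ above $x_1$ but leave $\ell$ incomparable to at least one further alternative, so that $\ell$ is dominant \emph{nowhere} in $\prof{P}^0$. The update order $\sigma$ then runs in two phases. First it processes the pair $\ell x_1$ with $\ell$ listed first: choosing $|H| \ge |B|$ makes $\support{\ell x_1} \ge \support{x_1\ell}$ (with the tie broken in $\ell$'s favour), so every undecided agent, and in particular all of $G$, adopts $\ell \succ x_1$; since $x_1$ is the unique top of each $G$-agent, transitivity immediately makes $\ell$ dominant throughout $G$. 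Second, $\sigma$ processes the remaining pairs $\ell x$ (again with $\ell$ first): the freshly-climbed $G$-agents now support $\ell$ over everything, so (taking $|G| \ge |B|$) the running support keeps satisfying $\support{\ell x} \ge \support{x\ell}$, and the last incomparabilities in $H$ are resolved in $\ell$'s favour as well, making $\ell$ dominant across $G \cup H$.

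It then remains to count plurality positions. In $\prof{P}^0$ the bottom-count of $\ell$ equals $|B|$, while the other alternatives' bottom positions occur only inside $G$ and $H$; I would spread those bottoms over several alternatives (this is what consumes alternatives and forces $m$ to be large) so that each competitor sits at the bottom fewer than $|B|$ times, making $\ell$ the \emph{unique} plurality-dominated consensus, and since $\ell$ is dominant nowhere it is certainly not plurality dominant. In $\dyns{\sigma}(\prof{P}^0)$, by monotonicity $\ell$ is dominant in all of $G \cup H$ while every other alternative is dominant only in the frozen $B$-agents; spreading the tops of $B$ likewise keeps each competitor's dominant-count below $|G| + |H|$, so $\ell$ is the unique plurality-dominant consensus. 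The main obstacle, and the reason the bookkeeping is delicate, is the bootstrapping against $B$: because $B$ opposes $\ell$ on every pair with fixed weight $|B|$, the decisive comparisons must already command majority support from $H$ and the already-climbed agents, forcing $|H|,|G| \ge |B|$; yet making these groups that large pushes the bottom-counts of the alternatives used inside them toward $|B|$, so the seed and climbing agents must be assigned \emph{distinct} low alternatives, and $\ell$ must retain an incomparability in $H$ to avoid being dominant too early. Reconciling these competing inequalities simultaneously is precisely what requires $m \ge 8$. Finally, the $\plurundom$ variant follows from the same template with $\ell$ placed as an isolated (hence undominated) alternative rather than at the bottom of the certifying group, the dynamics and the final count being identical in structure.
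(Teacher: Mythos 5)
Your strategy is genuinely different from the paper's. You bootstrap $\ell$ directly: a seed group $H$ wins the first pair $\ell x_1$ (via the tie-break in $\ell$'s favour), the climbing group $G$ then has $\ell$ cascade to its top by transitivity, and the freshly converted $G$-agents supply the running support for all subsequent pairs $\ell x$. The paper never updates a pair involving $\ell$ at all: two complete voters who rank $\ell$ dead last provide $2$--$0$ support for three pairs $ax$, $by$, $cz$ among ``middle'' alternatives, and each of three incomplete voters is wired so that resolving one of these pairs drags $\ell$ to their top purely through transitive closure. The paper's route is more robust---it needs neither recomputation of supports during the run nor any tie-breaking in $\ell$'s favour---whereas yours leans on both; both are admissible under the model (the proof of the first proposition confirms that the update at step $t$ is decided on the supports in $\prof{P}^{t-1}$).

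That said, there are concrete gaps. First, the entire content of this statement is an explicit profile with verified counts, and you defer exactly that: $|B|,|G|,|H|$ and the assignment of alternatives are never fixed, and the ``delicate bookkeeping'' you acknowledge is the proof. Second, one bookkeeping step is wrong as stated. The proposition lives under the ``quality of consensus'' discussion, so $\prof{P}^0$ should contain a \emph{unique} plurality dominant alternative distinct from $\ell$ (the paper's profile has $w$ for this role); ``spreading the tops of $B$'' over several alternatives destroys precisely that uniqueness by creating a tie among the $B$-tops, whereas concentrating all $B$-tops on a single $w$ is what you need and is harmless for the final count since $|B| < |G| + |H|$. Third, your stated reason for needing eight alternatives is vacuous: in $G \cup H$ every alternative other than $\ell$ is incomparable to $\ell$, hence fully dominated in no ranking there, so the competitors' bottom-counts are automatically zero and no spreading is required---your scheme in fact closes with far fewer alternatives (harmless, since dummies can be added, but it means the inequality-juggling you present as the crux is not the crux). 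Finally, you must also stipulate that no $H$-agent ranks anything above $\ell$, otherwise those agents can never end with $\ell$ dominant.
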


	\begin{proof}
		
		Let $A = \{w, \ell, a, b, c, x, y, z\}$, and consider the 5-agent profile~$\prof{P}^0$ below (ignoring the dotted edges), where $w$ is the plurality dominant alternative and $\ell$ is the plurality dominated alternative. In this profile, the first three agents have incomplete rankings while the last two have reported a complete strict preference over the alternatives.
		
		\begin{center}
			\begin{tikzpicture}
				\node (agent1) at (0,0) {
					\begin{tikzpicture}
						\node (l) at (0, 0) {$\ell$};
						\node (a) at (0, -0.7) {$a$};
						\node (x) at (0.8, -0.7) {$x$};
						
						\node (w) at (0.3, -1.4) {$w$};
						\node (b) at (0.8, -1.4) {$b$};
						\node (c) at (1.3, -1.4) {$c$};
						
						\node (y) at (0.8, -2.1) {$y$};
						\node (z) at (1.3, -2.1) {$z$};
						\path[-stealth] (l) edge (a);
						\path[-stealth] (x) edge (w);
						\path[-stealth] (0.8, -0.9) edge (0.8, -1.2);
						\path[-stealth] (x) edge (c);
						
						\path[dotted,  -stealth] (0.2, -0.7) edge (0.6, -0.7);
						\path[dotted,  -stealth] (0.8, -1.6) edge (0.8, -1.9);
						\path[dotted,  -stealth] (c) edge (z);
				\end{tikzpicture}};
				
				\node (agent2) at (2,0) {
					\begin{tikzpicture}
						\node (l) at (0, 0) {$\ell$};
						\node (b) at (0, -0.7) {$b$};
						\node (y) at (0.8, -0.7) {$y$};
						
						\node (w) at (0.3, -1.4) {$w$};
						\node (a) at (0.8, -1.4) {$a$};
						\node (c) at (1.3, -1.4) {$c$};
						
						\node (x) at (0.8, -2.1) {$x$};
						\node (z) at (1.3, -2.1) {$z$};
						\path[-stealth] (l) edge (b);
						\path[-stealth] (y) edge (w);
						\path[-stealth] (y) edge (a);
						\path[-stealth] (y) edge (c);
						
						\path[dotted, -stealth] (a) edge (x);
						\path[dotted,  -stealth] (b) edge (y);
						\path[dotted, -stealth] (c) edge (z);
				\end{tikzpicture}};
				
				\node (agent3) at (4,0) {
					\begin{tikzpicture}
						\node (l) at (0, 0) {$\ell$};
						\node (c) at (0, -0.7) {$c$};
						\node (z) at (0.8, -0.7) {$z$};
						
						\node (w) at (0.3, -1.4) {$w$};
						\node (a) at (0.8, -1.4) {$a$};
						\node (b) at (1.3, -1.4) {$b$};
						
						\node (x) at (0.8, -2.1) {$x$};
						\node (y) at (1.3, -2.1) {$y$};
						\path[-stealth] (l) edge (c);
						\path[-stealth] (z) edge (w);
						\path[-stealth] (z) edge (a);
						\path[-stealth] (0.92, -0.9) edge (1.2, -1.2);
						
						\path[dotted, -stealth] (a) edge (x);
						\path[dotted,  -stealth] (1.3, -1.6) edge (1.3, -1.9);
						\path[dotted, -stealth] (c) edge (z);
				\end{tikzpicture}};
				
				\node (agent4) at (6,0) {
					\begin{tikzpicture}
						\node (w) at (0, 0) {$w$};
						\node (a) at (0,-0.6) {$a$};
						\node (x) at (0, -1.2) {$x$};
						\node (b) at (0,-1.8) {$b$};
						\node (y) at (0, -2.4) {$y$};
						\node (c) at (0, -3) {$c$};
						\node (z) at (0, -3.6) {$z$};
						\node (l) at (0, -4.2) {$\ell$};
						
						\path[-stealth] (w) edge (a);
						\path[-stealth] (a) edge (x);
						\path[-stealth] (0, -1.4) edge (0,-1.6) ;
						\path[-stealth] (0, -2) edge (0,-2.2) ;
						\path[-stealth] (0, -2.6) edge (0,-2.8) ;
						\path[-stealth] (c) edge (z);
						\path[-stealth] (0, -3.8) edge (0,-4) ;
				\end{tikzpicture}};
				
				\node (agent5) at (7,0) {
					\begin{tikzpicture}
						\node (w) at (0, 0) {$w$};
						\node (a) at (0,-0.6) {$a$};
						\node (x) at (0, -1.2) {$x$};
						\node (b) at (0,-1.8) {$b$};
						\node (y) at (0, -2.4) {$y$};
						\node (c) at (0, -3) {$c$};
						\node (z) at (0, -3.6) {$z$};
						\node (l) at (0, -4.2) {$\ell$};
						
						\path[-stealth] (w) edge (a);
						\path[-stealth] (a) edge (x);
						\path[-stealth] (0, -1.4) edge (0,-1.6) ;
						\path[-stealth] (0, -2) edge (0,-2.2) ;
						\path[-stealth] (0, -2.6) edge (0,-2.8) ;
						\path[-stealth] (c) edge (z);
						\path[-stealth] (0, -3.8) edge (0,-4) ;
				\end{tikzpicture}};
			\end{tikzpicture}
		\end{center}
		\noindent Consider what happens if the update order~$\sigma$ starts with the sequence $(ax,by,cz)$. After the agents have updated their preferences on these pairs (represented with dotted edges), we are left with a profile~$\prof{P}^3$ where $\ell$ is the plurality dominant alternative. No matter how the remaining preferences are updated, $\ell$ will remain the plurality dominant alternative. 
	\end{proof}
	
	
	A unanimity (majority) \emph{undominated} alternative can become a dominated alternative in all agents' preferences as this is a relatively weak consensus requirement. 
	We will thus focus on \unandom{} and \majdom{}. We say an alternative is \emph{unanimity (majority) dominated} if it is dominated by all alternatives in the preferences of all (a majority of) of agents. 
	
	\begin{proposition}
		A unan.\ (majority) dominated alternative cannot become the unan.\ (majority) dominant alternative. 
	\end{proposition}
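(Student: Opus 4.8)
The plan is to exploit the single structural feature of $\dyn$ that drives all the preservation results in this section: the dynamics is \emph{monotone} on each agent's relation. Concretely, I would first record the observation that $\succ_i^{t-1} \subseteq \succ_i^t$ for every agent $i$ and every step $t$. This is immediate from the update rule: if agent $i$ already holds an opinion on the discussed pair, her relation is left untouched; otherwise a single new ordered pair is added and the transitive closure only ever \emph{inserts} edges. Hence no pairwise ranking present at some stage is ever deleted, and in particular every comparison of the initial profile $\prof{P}$ survives into $\dyns{\sigma}(\prof{P})$, for every order $\sigma$.

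With this in hand, the unanimity case is direct. Suppose $a$ is unanimity dominated in $\prof{P}$, i.e.\ $b \succ_i a$ for every $b \in \alt \setminus \{a\}$ and every $i \in N$. By monotonicity each such edge is still present in $\dyns{\sigma}(\prof{P})$, so $a$ remains bottom-ranked in every agent's final preference. Since individual preferences stay acyclic throughout the process, $b \succ_i a$ rules out $a \succ_i b$; thus $a$ is dominant in no final preference, and \emph{a fortiori} not unanimity dominant.

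For the majority case I would argue by counting. Let $S = \{i \in N \mid a \text{ is dominated by all alternatives in } \succ_i\}$, with $|S| > |N|/2$ by assumption. Monotonicity preserves every agent in $S$: for each $i \in S$, $a$ stays bottom-ranked in the final profile, hence $a$ is not dominant in $\succ_i$. Consequently the set of agents in whose final preference $a$ is dominant is disjoint from $S$, and therefore contained in $N \setminus S$, which has size $|N| - |S| < |N|/2$. A strict minority cannot witness majority dominance, so $a$ is not majority dominant in $\dyns{\sigma}(\prof{P})$.

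The only point requiring care---and the closest thing to an obstacle---is the bookkeeping in the majority case: one must check that ``being bottom-ranked'' and ``being dominant'' are genuinely mutually exclusive for a single agent (which holds because $m \geq 3$ guarantees some dominating alternative and acyclicity forbids the reverse edge), and that the strict inequality $|S| > |N|/2$ correctly forces $|N \setminus S| < |N|/2$. Everything else reduces to the monotonicity observation, which is why both statements follow from the same short argument.
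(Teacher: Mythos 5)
Your proof is correct and follows essentially the same route as the paper's (which is only a sketch): the decisive observation in both is that the agents who already rank $a$ below every other alternative can never be flipped, so $a$ stays non-dominant for all of them (unanimity case) or for a strict majority of them (majority case). Your version is somewhat tighter in that it rests purely on the monotonicity of each individual relation under $\dyn$ and so avoids any claim about how updates on pairs involving $a$ resolve globally, which is the part the paper's sketch leans on.
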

	
	\begin{proof}[Proof sketch.]
		A majority dominated alternative is dominated by all alternatives in the preferences of at least half the agents, hence no update can be made in its favour, precluding it from becoming dominant in a majority 
		of the 
		rankings.
	\end{proof}

	\paragraph{Restricted Preference Domains.}	When reducing our scope to strict weak orderings, we find good news.
	
	\begin{proposition}\label{prop:weakordCW}
		For profiles of strict weak orderings, $\dyn$ is $\CW$-identity preserving. 
	\end{proposition}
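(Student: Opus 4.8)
The plan is to prove the sharper statement that, whenever $a=\CW(\prof{P})$, no agent ever acquires a new ranking $b\succ_i a$ of any alternative above $a$ during $\dyns{\sigma}$. Two elementary observations reduce the proposition to this. First, the update rule only ever unions in pairs and takes transitive closures, so each support $\supportz{t}{xy}$ is non-decreasing in $t$. Second, the ordered comparison between $a$ and a fixed $b$ can change only while some agent is still undecided about $\{a,b\}$, and all agents are decided at the latest once the pair $\{a,b\}$ has been explicitly discussed; hence the final margin between $a$ and $b$ equals the margin at the moment this pair is resolved. If no agent ever gains a ranking $b\succ a$, then $\supportz{t}{ba}=\supportz{0}{ba}$ right up to that moment, while $\supportz{t}{ab}\ge\supportz{0}{ab}>\supportz{0}{ba}$ by monotonicity and the Condorcet condition; since $\supportz{0}{ab}+\supportz{0}{ba}\le n$ this also forces $\supportz{0}{ba}<n/2$. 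So every undecided agent is routed to $a\succ b$, and $a$ beats $b$ by a strict majority in the complete profile $\dyns{\sigma}(\prof{P})$. As $b$ was arbitrary, $a$ is the (necessarily unique) Condorcet winner of $\dyns{\sigma}(\prof{P})$, i.e.\ identity is preserved.

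The core is a simultaneous induction on the time step $t$ maintaining, for every agent $i$ and with $U_i^t=\{x:x\succ_i^t a\}$ and $I_i^t=\{x:x\text{ incomparable to }a\text{ in }\succ_i^t\}$: \textbf{(H1)} $U_i^t=U_i^0$, and \textbf{(H2)} $x\succ_i^t y$ for all $x\in U_i^t$ and $y\in I_i^t$ (everything above $a$ is above everything in $a$'s own level). The base case $t=0$ is exactly where the restriction to strict weak orderings is spent: if $\succ_i$ is a strict weak ordering then $x\succ_i a$ places $x$ in a strictly better level than $a$, while $y\in I_i^0$ sits in the same level as $a$, so $x\succ_i y$, giving (H2); (H1) is immediate.

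For the inductive step I would split on the pair discussed at step $t$. If it is $\{a,b\}$, then by (H1) up to $t-1$ we have $\supportz{t-1}{ba}=\supportz{0}{ba}<\supportz{0}{ab}\le\supportz{t-1}{ab}$, so the majority strictly favours $a$, every undecided agent adopts $a\succ b$, and no ranking above $a$ is created. If the pair is $\{p,q\}$ with $p,q\neq a$ and (say) $p$ winning, then any freshly created ranking $u\succ_i a$ must route through the new edge $p\succ q$, so it needs $q\succ_i^{t-1}a$ together with an alternative $u\succeq_i^{t-1}p$ that was incomparable to $a$ (it cannot already lie above $a$, nor below $a$ without forming a cycle); but then (H2) gives $q\succ_i^{t-1}u$, while $u\succeq_i^{t-1}p$ and the new $p\succ q$ give $u\succ q$, a cycle --- contradiction. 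Hence (H1) persists; and since $U_i$ is unchanged while $D_i=\{x:a\succ_i x\}$ can only grow, $I_i^t\subseteq I_i^{t-1}$, so (H2) is inherited from $t-1$ by monotonicity.

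I expect the main obstacle to be pinning down the right invariant. The tempting hypothesis --- that individual preferences stay strict weak orderings throughout $\dyns{\sigma}$ --- is false: inserting a single edge inside one level already breaks the ``incomparability is transitive'' property, so one cannot simply reuse the base-case argument at each step. The real content is to isolate the weaker property (H2), which concerns only the interface between the upset of $a$ and the level of $a$, to verify that it is precisely what prevents a new ranking above $a$ from ever being generated, and to check that (unlike the full strict-weak-order structure) it does survive every update. As a sanity check, in the earlier counterexample showing $\CW$ is not preserved for general incomplete preferences the critical agent's preference is not a strict weak ordering and fails (H2), which is exactly what lets the dynamics manufacture an edge into the former Condorcet winner.
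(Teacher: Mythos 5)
Your proof is correct and follows essentially the same route as the paper's: an induction over the time steps showing that no agent ever acquires a new ranking of any alternative above the Condorcet winner, with the tier structure of the initial strict weak ordering used to rule out the transitive creation of such a ranking. Your explicit invariant (H2) is exactly the fact the paper derives on the fly from its induction hypothesis together with the observation that the winner and any alternative still incomparable to it must have started in the same tier; your write-up is somewhat more careful about why one cannot simply assume intermediate preferences remain strict weak orderings, a point the paper's proof passes over quickly.
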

	
	\begin{proof}
		Let $w$ be the Condorcet winner in $\prof{P}^0$. Consider a given agent~$i$, and some alternative $a \neq w$. We claim that it is not possible to have $a \succ^t_i w$ at any time step~$t$ of the majority dynamics, unless $a \succ^0_i w$. The proof is by induction.

Suppose our claim holds for all rounds $t'<t$ and all alternatives~$z\neq w$. Then $w$ is a Condorcet winner in round~$t-1$.  If agent~$i$ has already have expressed either that $a \succ_i w$ or that $w \succ_i a$ in a round before~$t$, no update can be done on this pair, so we need only consider the case where $a,w \in S_i^g$ for some $g$, where $S_i^g$ is the $g$-th tier of $i$'s preferences. There are two cases we need to examine for the update in round~$t$. 
		
		Suppose the update is performed on $aw$ or $wa$ directly. Then, $i$ will support $wa$ since $w$ is a Condorcet winner. 
		
		Suppose instead we are updating on a pair that is neither $aw$ nor $wa$, leading $i$ to prefer $a$ over $w$ because of transitivity. 
	
		Then, there must exist some alternative $b$ such that $a \succ_i^{t-1} b$ and $b \succ_i^{t-1} w$. This, however, contradicts our assumption that $\succ_i$ is a strict weak ordering, as this---in combination with our assumption that $aw \not\in {\succ_i^{t-1}}$ and our induction hypothesis---precludes the existence of an alternative that is dominated by one of $a$ and $w$, while dominating the other. 
		We conclude that if a Condorcet winner exists in $\prof{P}^0$, then that alternative must remain a Condorcet winner in $\dyns{\sigma}(\prof{P}^0)$.
	\end{proof}
	
	\noindent The proof of Proposition~\ref{prop:undomnotpreserved} uses strict weak orderings, except for $\unanundom$, and that of Proposition~\ref{prop:dom-preserve} also holds with strict weak orderings. Only the case of $\unanundom$ is left then.
	\begin{proposition}
		\label{prop:unanUDTiered}
		For profiles of strict weak orderings, $\dyn$ is $\unanundom$-identity preserving.
	\end{proposition}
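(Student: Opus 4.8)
The plan is to adapt the inductive argument of Proposition~\ref{prop:weakordCW} from the Condorcet-winner notion to the undominated notion. Let $a$ be the unique \unanundom\ alternative of $\prof{P}^0$; by definition this means $a$ lies in the top tier $S_i^1$ of every agent~$i$, while no other alternative lies in every agent's top tier. I would prove the invariant that $a$ stays undominated in every agent's preference throughout the dynamics, i.e.\ $\supportz{t}{ba} = 0$ for all $b \in \alt \setminus \{a\}$ and all times~$t$. Since the terminal profile $\dyns{\sigma}(\prof{P}^0)$ is complete, the unique undominated alternative of each agent there is its top element; so the invariant forces $a$ to be everyone's top, which makes $a$ the unique \unanundom\ alternative of $\dyns{\sigma}(\prof{P}^0)$, exactly as required by Definition~\ref{dfn:id-preserving}.

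Two preliminary facts set up the induction. First, supports are monotonically non-decreasing along the dynamics, because each update only enlarges an agent's relation ($\succ_i^t \supseteq \succ_i^{t-1}$); hence $\supportz{t}{ab} \geq \supportz{0}{ab}$ at every step. Second---and this is where the strict-weak-ordering restriction enters---I claim $\supportz{0}{ab} \geq 1$ for every $b \neq a$. Indeed, by uniqueness there is some agent~$j$ for whom $b \notin S_j^1$; since $a \in S_j^1$ sits in $j$'s top tier, every alternative outside that tier is dominated by $a$, so $a \succ_j b$. Combined with $\supportz{0}{ba} = 0$ (as $a$ is initially undominated everywhere), this yields the \emph{strict} inequality $\supportz{0}{ab} \geq 1 > 0 = \supportz{0}{ba}$ for all $b \neq a$.

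For the inductive step, assume $\supportz{t-1}{ca} = 0$ for all $c$, and consider the pair $p_t$ discussed at time~$t$. If $p_t$ involves $a$, say $p_t \in \{ab, ba\}$, then by monotonicity and the base case $\supportz{t-1}{ab} \geq 1 > 0 = \supportz{t-1}{ba}$, so the majority strictly favours $ab$ and every updating agent adopts $a \succ b$---never the reverse, whatever orientation the pair has in $\sigma$. This is the crux and the main obstacle: in general partial orders a tie $\supportz{t-1}{ab} = \supportz{t-1}{ba} = 0$ could let the tie-break install $b \succ a$ (precisely the mechanism behind the \unanundom\ counterexample in Proposition~\ref{prop:undomnotpreserved}), but the tier structure rules such ties out, because any alternative incomparable to $a$ in a strict weak ordering would share $a$'s top tier and thus be a second \unanundom\ alternative. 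If instead $p_t = \{x,y\}$ with $a \notin \{x,y\}$, the only danger is that transitivity creates a fresh edge $c \succ_i a$; but inspecting the closure $\llbracket \succ_i^{t-1} \cup \{\cdot\} \rrbracket$ shows that any newly created pair whose head is $a$ would require one of $x,y$ to already dominate $a$ or to equal $a$, both excluded by $a \notin \{x,y\}$ and the induction hypothesis. In either case no agent acquires an incoming edge at $a$, so $\supportz{t}{ca} = 0$ for all $c$, which closes the induction and, via the completeness of the terminal profile, finishes the proof.
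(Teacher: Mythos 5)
Your proof is correct and follows essentially the same route as the paper's: the key observation in both is that for every $b\neq w$ the strict-weak-ordering (tier) structure forces $\supportz{0}{wb}\geq 1$ while $\supportz{0}{bw}=0$, so no $0$--$0$ tie can ever be broken against the consensus alternative. Your write-up is in fact somewhat more careful than the paper's, since you make the time induction explicit and verify that the transitive closure can never create an incoming edge at $w$ --- a step the paper leaves implicit.
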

	\begin{proof}
		Consider a profile $\prof{P}^0$ that represents strict weak orderings. Let $\alt' = \bigcup_{i \in N} S_i^1$ be the set of all alternatives that are in the top tier of at least one agent. 
		
		Suppose there exists a unique unanimity undominated alternative $w$ in $\prof{P}$. Then for any $a \in \alt' \setminus \{w\}$, there exists at least one $i \in N$ such that $w \succ_i a$, and that for no agent $i' \in N$ we  have $a \succ_{i'} w$. Hence, for all update order $\sigma$, when updating on the pair $wa$ or $aw$, agents will update in favour of $w$ and that for all $a \in \alt' \setminus \{w\}$. Moreover, since preferences are strict weak orderings, all alternatives in $\alt \setminus \alt'$ are initially unanimously dominated by $w$. Overall, for any update order $\sigma$, $w$ will be the unanimity undominated consensus in $\dyns{\sigma}(\prof{P}^0)$.
	\end{proof} 
	%

	
	
	\section{Controlling Consensus} \label{sec:controlling}
	
	We now investigate control issues. Note that all our proofs in this section use profiles consisting only of strict weak orderings. Thus, our results hold also in this restricted case. 
	
	We start by exploring positive control. 
	
	\begin{proposition}
		Take $C \in \{$`\CW', `\unanundom', `\unandom', `\majdom$'\}$. Then $\dyn$ enables positive $C$ identity (and thus also existence) control.
	\end{proposition}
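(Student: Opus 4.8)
The plan is to split the four notions into two groups. For $\unandom$ and $\majdom$, Proposition~\ref{prop:dom-preserve} already shows that $\dyn$ preserves their identity under \emph{every} order, so in particular \emph{some} order preserves the consensus and positive identity control holds immediately. The substantive cases are $\CW$ and $\unanundom$, where preservation can fail for some orders. For these I would exhibit a single witnessing order: writing $w$ for the consensus alternative of $\prof{P}$, discuss \emph{all pairs involving $w$ first} (in any internal order), forming Phase~1, and only afterwards discuss the pairs among $\alt \setminus \{w\}$, forming Phase~2.

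For $\CW$, the engine of the Phase~1 argument is that every update on a pair $wb$ is resolved in favour of $w$ and never introduces an edge pointing \emph{into} $w$. Since $w$ is a Condorcet winner, $\support{wb} > \support{bw}$ for every $b$. I would track supports across Phase~1 and argue that $N_{bw}$ never increases: the only updates are on $w$-pairs, each adding an edge $w \succ \cdot$, whose transitive consequences are edges $w \succ c$ or $c \succ b'$, none of which is an edge into $w$. Meanwhile $N_{wb}$ can only grow. Hence, at the moment $wb$ is actually discussed, a strict majority still favours $w$, and once all $w$-pairs have been processed every agent has a fixed opinion on each $wb$, with a strict majority holding $w \succ b$.

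Phase~2 discusses only pairs inside $\alt \setminus \{w\}$. The key point here is that the majority opinions $w \succ b$ fixed in Phase~1 cannot be reversed: an already-decided pair is never re-updated, and taking a transitive closure cannot turn $w \succ b$ into $b \succ w$ without creating a forbidden cycle. Thus no agent holding $w \succ b$ can acquire $b \succ w$, so $N_{wb}$ stays above $n/2$, and $w$ remains a Condorcet winner in $\dyns{\sigma}(\prof{P})$. For $\unanundom$ the same order works, and the analysis is cleaner: because $w$ is unanimity undominated, \emph{no} agent ever holds $b \succ w$, so $\support{bw} = 0$ throughout Phase~1 and each $w$-pair is resolved towards $w$ (ties being broken towards $w$, the first element of the pair). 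After Phase~1 every agent therefore has $w \succ b$ for all $b$, i.e.\ $w$ is unanimously dominant; invoking Proposition~\ref{prop:dom-preserve} then shows $w$ stays unanimously dominant, hence the unique unanimity undominated alternative in $\dyns{\sigma}(\prof{P})$.

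The step I expect to be most delicate is the bookkeeping that no edge into $w$ is ever created---directly or through the transitive closure $\llbracket \cdot \rrbracket$---since this single invariant is what locks the Phase~1 majorities for $\CW$ and what makes $w$ unanimously dominant for $\unanundom$. I would establish it by checking each branch of the update rule followed by transitive closure, verifying that the only edges ever added are of the form $w \succ \cdot$ or edges between alternatives in $\alt \setminus \{w\}$, never an edge $\cdot \succ w$. A secondary point worth care is the Phase~1 cross-effect, namely that discussing one $w$-pair can only help (never hurt) the support of $w$ on the remaining $w$-pairs, which follows from the same invariant together with the monotonicity of $N_{wb}$.
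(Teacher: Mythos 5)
Your proposal matches the paper's argument exactly: it handles $\unandom$ and $\majdom$ by appealing to identity preservation (Proposition~\ref{prop:dom-preserve}), and for $\CW$ and $\unanundom$ it uses the same witnessing order that discusses all pairs $wa$ involving the consensus alternative $w$ first. The paper only gives a proof sketch, and your two-phase bookkeeping (no edge into $w$ is ever created, so the relevant supports are locked in before the remaining pairs are discussed) correctly fills in the details it omits.
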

	
	\begin{proof}[Proof sketch]
		For $\unandom$ and $\majdom$, the statement is immediate as $\dyn$ preserves the identity of consensus.
		For $\CW$ and $\unanundom$, call $w$ the consensus alternative. We claim that for the update order $\sigma$ that starts with all the pairs $wa$, for every $a \in \alt \setminus \{w\}$, the consensus is preserved by $\dyns{\sigma}$.
	\end{proof}
	
	\begin{proposition}\label{prop:plur-control}
		Take $C \in \{$`\plurundom', `\plurdom', `\majundom'$\}$. Then $\dyn$ does not enable positive $C$ existence (and thus neither identity) control.
	\end{proposition}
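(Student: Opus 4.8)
The plan is to prove all three non-control results by a single counterexample strategy: by the negation of Definition~\ref{dfn:ex-ccontrol}, it suffices, for each $C\in\{\plurundom,\plurdom,\majundom\}$, to exhibit one profile $\prof{P}$ with $C(\prof{P})\neq\bot$ such that $C(\dyns{\sigma}(\prof{P}))=\bot$ for \emph{every} order $\sigma$. Throughout I would use the fact that $\dyns{\sigma}(\prof{P})$ is always a complete profile, so in the output ``undominated'' and ``dominant'' both collapse to ``being the unique top element'' of each agent's order. Hence in $\dyns{\sigma}(\prof{P})$ the $\plurundom$ and $\plurdom$ consensus coincide (both are the strict plurality winner among top elements), and a single plurality construction settles both of these notions.

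For $\majundom$ I would exploit a clean lever: if some alternative $a$ is the top element of a strict majority of agents in a complete profile, then all those agents rank $a$ above every rival, so $a$ is a Condorcet winner of that profile. Consequently, \emph{no} alternative is $\majundom$ in $\dyns{\sigma}(\prof{P})$ as soon as $\dyns{\sigma}(\prof{P})$ has no Condorcet winner. The construction is therefore a profile of strict weak orderings in which $w$ is undominated in a majority of agents---made possible \emph{without} $w$ being an initial Condorcet winner by placing $w$ in a top tier tied (incomparable) with other alternatives---while the strict pairwise-majority relation on a critical triple is a robust $3$-cycle. I would then argue this cyclicity survives the dynamics under any order, so the output never has a Condorcet winner and hence never a $\majundom$ consensus. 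For the two plurality notions I would instead build symmetric, cyclically structured profiles in which $w$ wins the initial plurality of top elements by a single agent, but the pairwise majorities on the remaining alternatives are balanced so that, whatever the order, exactly the agents currently topping $w$ on a contested tier are forced to demote it, and the freed top positions are redistributed to two rivals in equal numbers---producing a plurality tie in $\dyns{\sigma}(\prof{P})$. As in Proposition~\ref{prop:undomnotpreserved}, these profiles can be taken to consist of strict weak orderings, consistent with the remark opening Section~\ref{sec:controlling}.

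The main obstacle is the universal quantifier over orders: I must show that \emph{no} choice of $\sigma$ lets the chair rescue $w$. The key tool is that supports are monotone along any run---an update on a pair only ever adds the current majority direction for undecided agents, and transitive closure only adds edges---so the support of each ordered pair is non-decreasing. Using this I would establish a support-invariance lemma guaranteeing that, until a critical pair is first discussed, the anti-$w$ direction (respectively, each cyclic direction) remains the strict majority among the opinionated agents. The delicate point is bounding the extra support that transitive closure can inject through \emph{other} pairs before the critical pair is reached, so that the tie-break always resolves against $w$ (respectively, so that none of the three majorities of the $3$-cycle can ever flip), independently of $\sigma$. Once these robustness lemmas are in place, the forced demotions and the balanced redistribution of top positions hold in every run, the output has no Condorcet winner in the $\majundom$ case and a plurality tie in the plurality cases, and the consensus is destroyed for all $\sigma$, which is exactly the failure of positive $C$ existence control.
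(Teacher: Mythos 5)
Your overall strategy is the right one---exhibit, for each notion, a profile with initial consensus that loses it under \emph{every} order---but the proposal leaves the crucial step unproved and misses the device that makes the paper's proof short. You correctly identify the main obstacle (the universal quantifier over $\sigma$) and propose to handle it via support-monotonicity and ``robustness lemmas'' bounding what transitive closure can inject before a critical pair is discussed; but you never construct the profiles or establish those lemmas, and you explicitly flag the delicate point without resolving it. The paper sidesteps this entirely: it engineers profiles in which \emph{exactly one} pair is uncompared by anyone (e.g., for \plurdom{}, two agents with $a \succ b \succ c$ and two with $b \succ a$, $c \succ a$ and no opinion on $\{b,c\}$; for \plurundom{}/\majundom{}, a five-agent profile where only the pair $\{a,b\}$ is missing for two agents). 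Then only one update can ever occur, the order is irrelevant by construction, and one checks directly that the single forced update produces a tie among top elements (respectively, destroys the majority). That is the missing idea: rather than proving an invariant over all runs, make all runs identical.

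A second, smaller issue: your claim that a single plurality construction settles both \plurundom{} and \plurdom{} because the two notions coincide on the (complete) output profile is only half the story. The \emph{initial} profile must exhibit the respective consensus, and ``unique plurality dominant'' and ``unique plurality undominated'' are different conditions on an incomplete profile---indeed the paper's \plurdom{} counterexample has $a$ dominant for two agents but \emph{not} uniquely plurality undominated (each of $a$, $b$, $c$ is undominated for exactly two agents there). So you would need either two constructions, as the paper uses, or to verify that one profile satisfies both initial conditions. Your \majundom{} lever (majority-top implies Condorcet winner in a complete profile, so a cyclic pairwise majority kills \majundom{}) is sound in principle, but again hinges on the unestablished claim that the cycle survives every order.
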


	\begin{proof}
		For $\plurdom$, consider the profile below (this is the same profile that appears in the proof of Proposition~\ref{prop:undomnotpreserved}).
		\begin{center}
			\begin{tikzpicture}

				\node (agent1) at (0,0) {
					\begin{tikzpicture}
						\node (a) at (0, 0) {$a$};
						\node (b) at (0,-0.7) {$b$};
						\node (c) at (0, -1.4) {$c$};
						
						\path[-stealth] (a) edge (b);
						\path[-stealth] (b) edge (c);
				\end{tikzpicture}};
				
				\node (agent2) at (2,0) {
					\begin{tikzpicture}
						\node (a) at (0, 0) {$a$};
						\node (b) at (0,-0.7) {$b$};
						\node (c) at (0, -1.4) {$c$};
						
						\path[-stealth] (a) edge (b);
						\path[-stealth] (b) edge (c);
				\end{tikzpicture}};
				\node (agent3) at (4,0) {
					\begin{tikzpicture}
						\node (b) at (-0.4, 0) {$b$};
						\node (a) at (0,-0.6) {$a$};
						\node (c) at (0.4, 0) {$c$};
						
						\path[-stealth] (b) edge (a);
						\path[-stealth] (c) edge (a);
						\path[-stealth,dotted] (b) edge (c);
				\end{tikzpicture}};
				\node (agent4) at (6,0) {
					\begin{tikzpicture}
						\node (b) at (-0.4, 0) {$b$};
						\node (a) at (0,-0.6) {$a$};
						\node (c) at (0.4, 0) {$c$};
						
						\path[-stealth] (b) edge (a);
						\path[-stealth] (c) edge (a);
						\path[-stealth,dotted] (b) edge (c);
				\end{tikzpicture}};
				
			\end{tikzpicture}
		\end{center}
		
		\noindent In this profile alternative $a$ is the unique plurality dominant alternative. Note that only one update is possible in this profile, independent of the update order $\sigma$. In the resulting profile $b$ will be the dominant alternative in two agents' preferences, no matter the update order, meaning there will no longer be a unique plurality dominant alternative.

		For $\plurundom$ and $\majundom$ consider the following profile (ignoring dotted edges), where $a$ is the unique plurality and majority undominated alternative.
		%
		%
		%
		%
		%
		%
		%
		%
		%
		%

		\begin{center}
			\begin{tikzpicture}
				\node (agent1) at (0,0) {
					\begin{tikzpicture}
						\node (c) at (0, 0) {$c$};
						\node (b) at (0,-0.7) {$b$};
						\node (a) at (0, -1.4) {$a$};
						
						\path[-stealth] (c) edge (b);
						\path[-stealth] (b) edge (a);
				\end{tikzpicture}};
				\node (label1) at (0,-1.2) {2 agents}; 
				
				\node (agent2) at (2,0) {
					\begin{tikzpicture}
						\node (a) at (0, 0) {$a$};
						\node (c) at (0,-0.7) {$c$};
						\node (b) at (0, -1.4) {$b$};
						
						\path[-stealth] (a) edge (c);
						\path[-stealth] (c) edge (b);
				\end{tikzpicture}};
				\node (label2) at (2,-1.2) {1 agents}; 
				
				\node (agent3) at (4,0) {
					\begin{tikzpicture}
						\node (a) at (-0.4, 0) {$a$};
						\node (c) at (0,-0.6) {$c$};
						\node (b) at (0.4, 0) {$b$};
						
						\path[-stealth] (b) edge (c);
						\path[-stealth] (a) edge (c);
						\path[-stealth, dotted] (b) edge (a);
				\end{tikzpicture}};
				\node (label3) at (4,-1.2) {2 agents}; 
			\end{tikzpicture}
		\end{center}
		Here, there is again only one possible update, resulting to $b$ and $c$ being undominated in the preferences of exactly two agents, and one agent for $a$. Overall no alternative is majority undominated, nor uniquely plurality undominated.
	\end{proof}
	%

	Negative control is more difficult to achieve.
	
	\begin{proposition}
		Take $C \in \{$`\CW', `\unanundom'$\}$. Then $\dyn$ enables negative C control.
	\end{proposition}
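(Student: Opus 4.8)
I need to show that for $C \in \{\CW, \unanundom\}$, majority dynamics enables negative $C$ control. By Definition~\ref{dfn:dcontrol}, this means: for every profile $\prof{P}$ with $C(\prof{P}) = \bot$ (no consensus initially), either some order yields no consensus, or two orders yield different consensus alternatives. Since all profiles in this section use strict weak orderings, I should keep an eye on whether I can restrict to that case, but the statement as phrased is about general profiles.

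Let me think about what $C(\prof{P}) = \bot$ means for these two notions, and whether I can force the conclusion.

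The plan is to prove the statement by contradiction in both cases, exploiting a single structural tool: the \emph{promotion order} $\sigma_w$ that discusses all pairs $wa$ (for $a \in \alt \setminus \{w\}$) before everything else, exactly as in the positive-control proof above, together with its one-step ``demotion'' variant. Two monotonicity facts about $\dyn$ will do the work: (i) once a pair $ab$ has been decided for an agent it is never reopened, so a comparison resolved for \emph{every} agent is frozen for the rest of the process — later updates, including those forced by transitive closure, only add edges and cannot overturn a decided pair; and (ii) discussing $wa$ first can only introduce edges $w \succ a$ (and their transitive consequences out of $w$), never an edge $a \succ w$, so $\support{aw}$ is unchanged during the promotion phase while $\support{wa}$ can only grow.

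For $\CW$, I would suppose toward a contradiction that negative control fails for some profile $\prof{P}$ with $\CW(\prof{P}) = \bot$. Then both bullets of Definition~\ref{dfn:dcontrol} fail, so there is a single alternative $w$ with $\CW(\dyns{\sigma}(\prof{P})) = w$ for every order $\sigma$. Since $w$ is not a Condorcet winner in $\prof{P}$, there is some $b \neq w$ with $\support{wb} \leq \support{bw}$. Consider the order $\sigma'$ that discusses the single pair $bw$ first: because $\support{bw} \geq \support{wb}$ and ties are broken toward the first element, every agent undecided on $\{b,w\}$ adopts $b \succ w$, so afterwards $w$ is preferred to $b$ by exactly $\support{wb}$ agents and $b$ to $w$ by $n - \support{wb}$. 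From $\support{wb} \leq \support{bw}$ and $\support{wb} + \support{bw} \leq n$ we get $\support{wb} \leq n/2$, hence $w$ does not strictly beat $b$; this comparison is now decided for all agents and therefore frozen, so $w$ is not a Condorcet winner in $\dyns{\sigma'}(\prof{P})$ — contradicting $\CW(\dyns{\sigma'}(\prof{P})) = w$.

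For $\unanundom$, I would first record a persistence lemma: if $a$ is dominated in some $\succ_i$, that edge survives every update, so $a$ is never again undominated for agent $i$; consequently any alternative that is unanimity undominated in the final complete profile must already be unanimity undominated in $\prof{P}$. Given $\prof{P}$ with $\unanundom(\prof{P}) = \bot$, if no alternative is unanimity undominated in $\prof{P}$, the lemma gives $\unanundom(\dyns{\sigma}(\prof{P})) = \bot$ for every $\sigma$, satisfying the first bullet. Otherwise the absence of consensus forces at least two unanimity-undominated alternatives $w_1 \neq w_2$; for each $w_j$ the promotion order $\sigma_{w_j}$ makes $w_j$ dominant for every agent (no agent ranks anything above a unanimity-undominated alternative, so each pair $w_j a$ is resolved in its favour), yielding $\unanundom(\dyns{\sigma_{w_1}}(\prof{P})) = w_1 \neq w_2 = \unanundom(\dyns{\sigma_{w_2}}(\prof{P}))$ and satisfying the second bullet.

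The hard part in both arguments is the freezing step: I must be certain that once the critical comparison has been fixed against $w$ (or in favour of a promoted $w_j$), no later update — in particular none triggered by transitive closure when the remaining pairs are discussed — can reintroduce the opposite edge for any agent. Checking this, together with the support bookkeeping under the stated tie-breaking convention so that the relevant majority is strict exactly where claimed, is where the care lies; the surrounding case split is then routine.
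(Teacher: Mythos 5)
Your proof is correct and follows essentially the same strategy as the paper's: for \CW{}, exploit that every non-Condorcet-winner $w$ has some $b$ with $\support{bw}\geq\support{wb}$ and schedule $bw$ first to block $w$ (the paper does this directly over all candidates rather than by contradiction); for \unanundom{}, use persistence of dominating edges plus orders promoting each of the (at least two) undominated alternatives. Your use of the full promotion order $\sigma_{w_j}$ and the explicit freezing argument is if anything slightly more careful than the paper's single-pair version, but it is the same underlying idea.
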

	
	\begin{proof}[Proof]
		We first consider the case of $\CW$. Let $\prof{P}$ be a profile without a Condorcet winner. Then, for 
		any alternative $a \in \alt$, there is another alternative $b \in \alt \setminus \{a\}$ such that $\support{ba} \geq \support{ab}$. Now, for the update order $\sigma$ starting with the pair $ba$, we must have $\CW(\dyns{\sigma}(\prof{P})) \neq a$. If there is no Condorcet winner in $\dyns{\sigma}(\prof{P})$, we are done. In the other case, let $w = \CW(\dyns{\sigma}(\prof{P}))$. Because the previous also applies to $w$, there is $\sigma'$ such that $w$ is not the Condorcet winner in $\dyns{\sigma'}(\prof{P})$. Now if $\CW(\dyns{\sigma'}(\prof{P})) = \bot$, we are done. Otherwise, we have two update orders $\sigma$ and $\sigma'$ such that $\CW(\dyns{\sigma}(\prof{P})) \neq \CW(\dyns{\sigma'}(\prof{P}))$, which proves the statement.
		
		For the case of \unanundom{}, it is clear that for any profile $\prof{P}$ with no unanimity undominated alternative, no such alternative can arise at the end of $\dyn$. If, on the other hand, there are two unanimity undominated alternatives $a$ and $b$ in $\prof{P}$, then for $\sigma$ starting with $ab$, we will have $\unanundom(\dyns{\sigma}(\prof{P})) = a$. However, for $\sigma'$ starting with $ba$, we would obtain $\unanundom(\dyns{\sigma'}(\prof{P})) = b \neq \unanundom(\dyns{\sigma}(\prof{P}))$.
	\end{proof}
	
	\begin{proposition}
		\label{prop:noNegativeControl}
		Take $C \in \{$ `\unandom', `\plurundom', `\plurdom', `\majundom', `\majdom'$\}$. Then $\dyn$ does not enable negative $C$ control.
	\end{proposition}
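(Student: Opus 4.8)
The plan is to refute negative control by a direct counterexample for each notion. Unpacking Definition~\ref{dfn:dcontrol}, $\dyn$ \emph{fails} to enable negative $C$ control as soon as there is a single profile $\prof{P}$ with $C(\prof{P}) = \bot$ for which \emph{both} disjuncts are false, i.e.\ for which some fixed alternative $a$ satisfies $C(\dyns{\sigma}(\prof{P})) = a \neq \bot$ for \emph{every} update order $\sigma$. So for each $C$ I would exhibit a consensus-free profile on which the dynamics deterministically---regardless of order---installs one and the same consensus winner.

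Two monotonicity facts drive the constructions, and I would isolate them first. Since $\dyn$ only ever adds edges, for a fixed agent the property ``$a$ is dominant'' can only be acquired (never lost), whereas ``$a$ is undominated'' can only be lost (never acquired); hence the number of agents for whom $a$ is dominant is non-decreasing along the dynamics and the number for whom $a$ is undominated is non-increasing. Second, if an alternative $a$ strictly wins every pairwise majority and, crucially, no agent can ever come to hold a reverse edge $b \succ a$, then every resolution of a pair involving $a$ is decided in $a$'s favour independently of $\sigma$, so $a$ is driven to the top of every agent who starts with nothing above it.

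I would then give two profiles over $\{a,b,c\}$, both consisting only of strict weak orderings, so that the negative results hold in the restricted domain as well (matching Table~\ref{tab:sum}). For the three \emph{dominant} notions I take one agent with $a,b$ incomparable above $c$ (that is, $\{ac,bc\}$) and two agents with $a,c$ incomparable above $b$ (that is, $\{ab,cb\}$). No agent has a dominant alternative, so $\unandom$, $\majdom$ and $\plurdom$ all return $\bot$; yet $a$ sits in every top tier and beats both $b$ and $c$ by a strict majority, so by the second fact above $\dyn$ makes $a$ unanimity dominant under every order, forcing the same consensus $a$ for all three notions. For the two \emph{undominated} notions I take two agents with $a,b$ incomparable above $c$ ($\{ac,bc\}$) and one agent with the complete order $c \succ a \succ b$. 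Here $a$ and $b$ are exactly the two majority- and plurality-undominated alternatives, so $\majundom$ and $\plurundom$ both return $\bot$; but the only comparison any agent can still update is $\{a,b\}$, which a strict majority resolves as $a \succ b$, after which $a$ is the unique (majority and plurality) undominated alternative---again for every order.

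The one genuinely delicate point, and the step I would spend the most care on, is the order-independence claim: I must verify that no ordering can overturn the pivotal pairwise comparisons. For the undominated profile this is immediate, since exactly one update is ever possible. For the dominant profile it reduces to showing that neither $b \succ a$ nor $c \succ a$ can ever be created in any agent: directly this is impossible because $a$ wins those majorities, and transitively a $b \succ a$ edge would require $c \succ a$ while a $c \succ a$ edge would require $b \succ a$, so neither can ever bootstrap from the initial state in which every majority favours $a$. Granting this, every intermediate profile keeps $a$'s majorities strict, so the dynamics converges to the same winner under all orders, and both disjuncts of Definition~\ref{dfn:dcontrol} fail.
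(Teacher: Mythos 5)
Your proposal is correct and follows essentially the same strategy as the paper's proof: for each notion, exhibit a consensus-free profile of strict weak orderings on which every update order deterministically yields the same unique consensus alternative, so that both disjuncts of Definition~\ref{dfn:dcontrol} fail. The only cosmetic differences are the grouping of notions (the paper covers \majundom{} together with the dominant notions and isolates \plurundom{}, whereas you pair \majundom{} with \plurundom{}---both groupings check out) and your explicit monotonicity and order-independence argument, which makes rigorous what the paper leaves as an easy check.
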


	\begin{proof}
		
		For $\plurdom$, $\majdom$, $\majundom$ and $\unandom$, consider the following profile (ignoring the dotted edges) where no unique consensus alternative exists for any of the three notions. 
		
		\begin{center}
			\begin{tikzpicture}
				\node (agent1) at (0,0) {
					\begin{tikzpicture}
						\node (b) at (-0.4, 0) {$b$};
						\node (a) at (0,-0.6) {$a$};
						\node (c) at (0.4, 0) {$c$};
						
						\path[-stealth] (b) edge (a);
						\path[-stealth] (c) edge (a);
						\path[-stealth,dotted] (b) edge (c);
					\end{tikzpicture}
				};
				\node (agent2) at (2,0) {
					\begin{tikzpicture}
						\node (a) at (0.4, 0) {$a$};
						\node (c) at (0,-0.6) {$c$};
						\node (b) at (-0.4, 0) {$b$};
						
						\path[-stealth] (a) edge (c);
						\path[-stealth] (b) edge (c);
						\path[-stealth,dotted] (b) edge (a);
					\end{tikzpicture}
				};
				
				\node (agent3) at (4,0) {
					\begin{tikzpicture}
						\node (a) at (0.4, 0) {$a$};
						\node (c) at (0,-0.6) {$c$};
						\node (b) at (-0.4, 0) {$b$};
						
						\path[-stealth] (a) edge (c);
						\path[-stealth] (b) edge (c);
						\path[-stealth,dotted] (b) edge (a);
					\end{tikzpicture}
				};
			\end{tikzpicture}
		\end{center}
		For any update order $\sigma$, we will end up in a profile where $b$ is the unanimity dominant alternative (and thus also the majority and plurality dominant alternative). 
		
		For $\plurundom$ consider the following profile (ignoring the dotted edges) where no unique plurality undominated alternative exists. 
		
		\begin{center}
			\begin{tikzpicture}

				\node (agent1) at (0,0) {
					\begin{tikzpicture}
						\node (c) at (0, 0) {$c$};
						\node (a) at (0,-0.6) {$a$};
						\node (b) at (0, -1.2) {$b$};
						
						\path[-stealth] (c) edge (a);
						\path[-stealth] (a) edge (b);
				\end{tikzpicture}};
				
				\node (agent1b) at (1,0) {
					\begin{tikzpicture}
						\node (b) at (0, 0) {$b$};
						\node (a) at (0,-0.6) {$a$};
						\node (c) at (0, -1.2) {$c$};
						
						\path[-stealth] (b) edge (a);
						\path[-stealth] (a) edge (c);
				\end{tikzpicture}};

				\node (agent3) at (4,0) {
					\begin{tikzpicture}
						\node (a) at (-0.4, 0) {$a$};
						\node (b) at (0,-0.6) {$b$};
						\node (c) at (0.4, 0) {$c$};
						
						\path[-stealth] (a) edge (b);
						\path[-stealth] (c) edge (b);
						\path[-stealth,dotted] (a) edge (c);
				\end{tikzpicture}};
				
				\node (agent2) at (2,0) {
					\begin{tikzpicture}
						\node (a) at (0, 0) {$a$};
						\node (b) at (0,-0.6) {$b$};
						\node (c) at (0, -1.2) {$c$};
						
						\path[-stealth] (a) edge (b);
						\path[-stealth] (b) edge (c);
				\end{tikzpicture}};
				
			\end{tikzpicture}
		\end{center}
		It is easy to see that any update will result in $a$ becoming the unique plurality undominated alternative. 
	\end{proof}
	
	
	\section{Experimental Analysis} \label{sec:experiments}
	
	The previous two sections showed how diverse the outcomes of $\dyn$ can be. We complement this formal analysis by an experimental one in order to quantify the different effects.\footnote{The experiments have been coded in Python, run on a Debian machine with 16 cores and 16GB RAM. The code and the data is available at \href{https://github.com/Simon-Rey/Let-s-Agree-to-Agree-Majority-Dynamics-for-Incomplete-Preferences}{https://github.com/Simon-Rey/Let-s-Agree-to-Agree-Majority-Dynamics-for-Incomplete-Preferences}.}
	
	For each experiment, we display the results for all consensus notion. There are overall very few distinctions between dominant and undominated-based consensus notions, so we only present the undominated case. Moreover, unanimity-based consensus imposes such a strong requirement that the relevant plots are uninformative, and have thus been omitted.
	
	\subsection{Quantifying the Effects on Consensus}

	\begin{figure*}
		\centering
		\includegraphics[width = \linewidth]{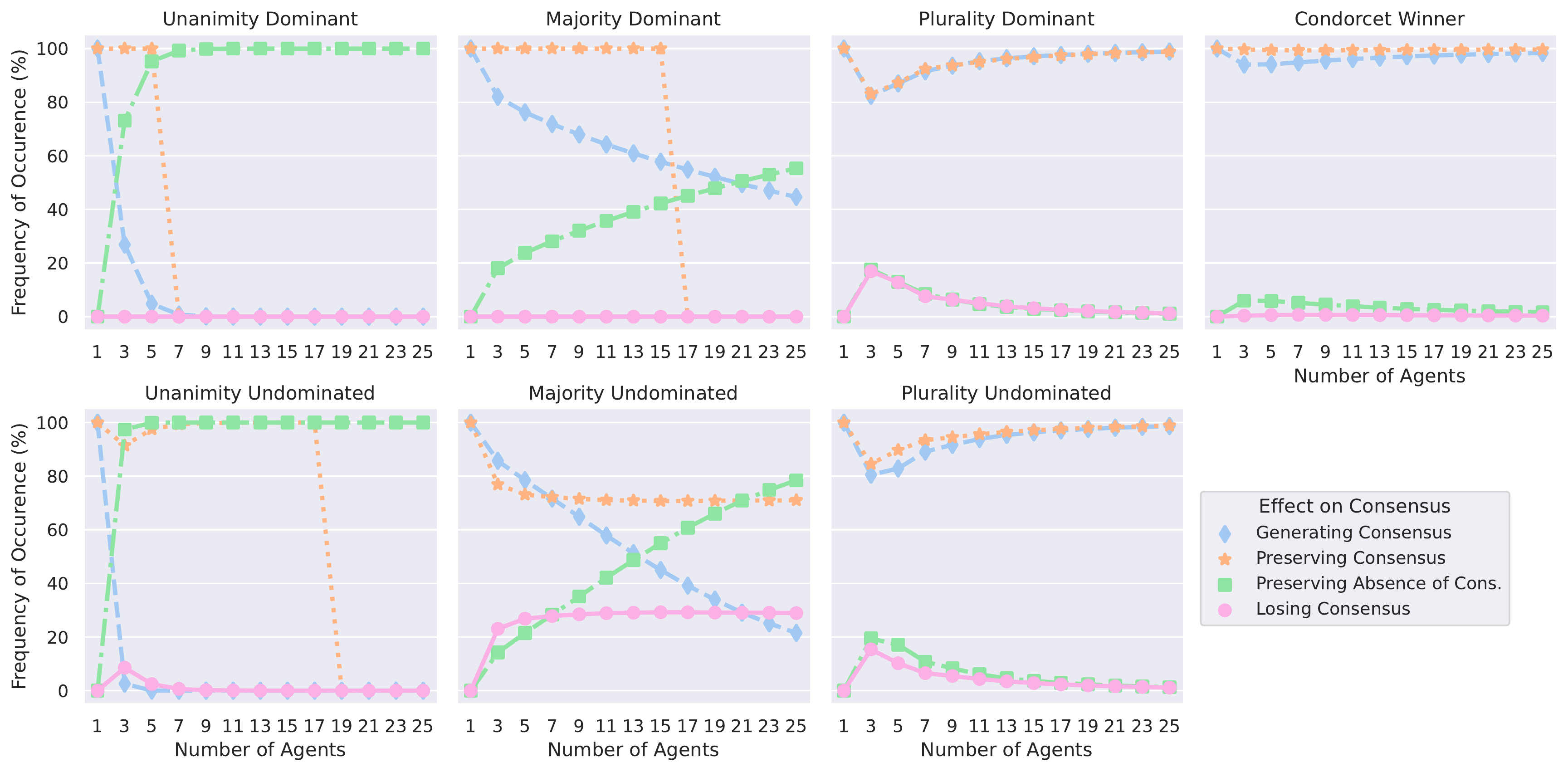}
		\caption{Frequency of each effect on consensus with respect to the number of agents.
			For preservation and loss of consensus, we normalized over the number of profiles with initial consensus, while for the other two effects over the number of profiles without initial consensus.}
		\label{fig:frequencyExpe}
	\end{figure*}

	\begin{figure*}
		\includegraphics[width=\linewidth]{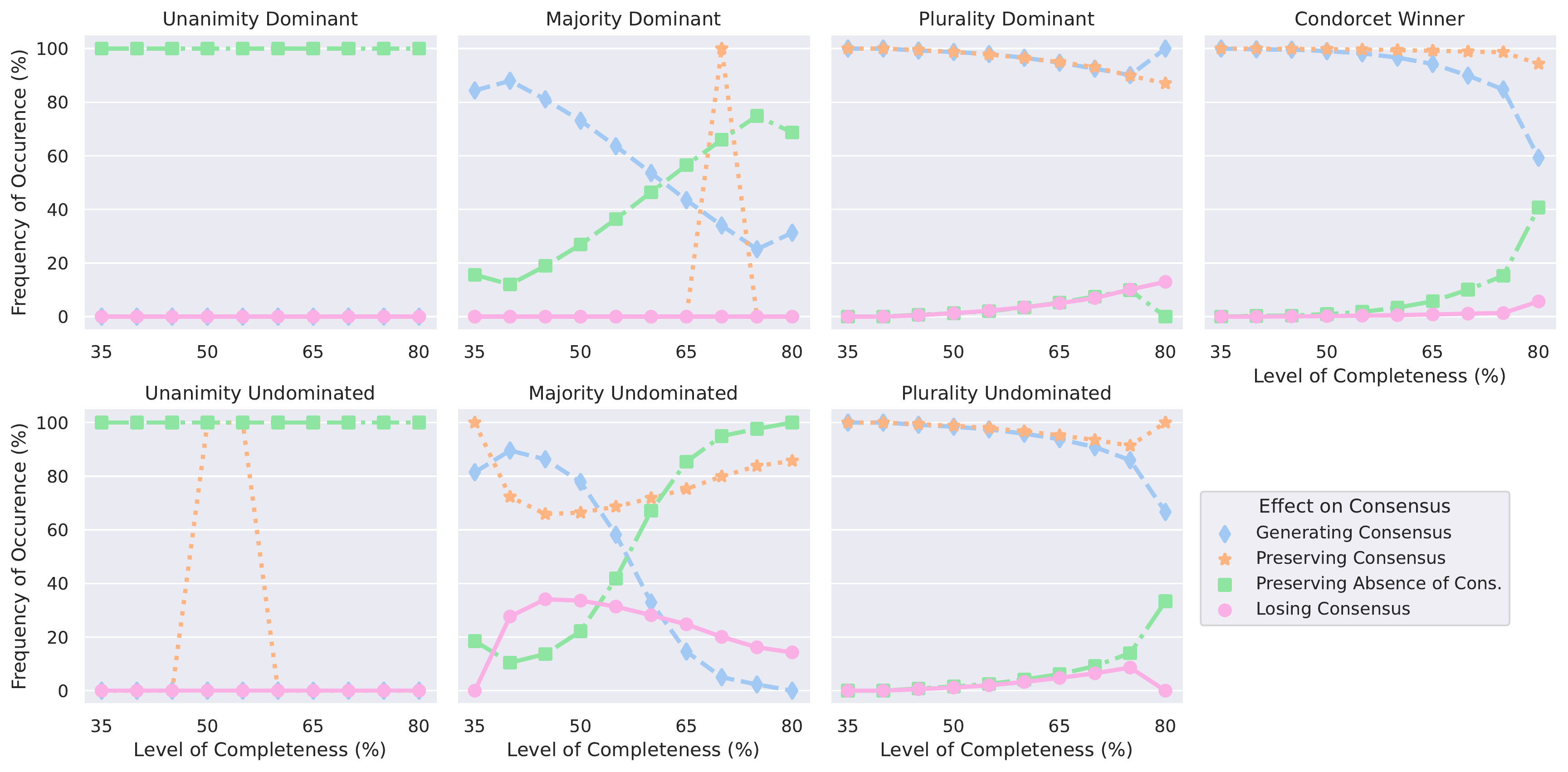}
		\caption{Frequency at which $\dyn$ has each effect on consensus with respect to the completeness level of the profiles. Only profiles with 15 agents were considered, and completeness levels have been rounded to the closest multiple of 5.
			For preservation and loss of consensus, we normalized over the number of profiles with initial consensus; while for the other two effects over the number of profiles without initial consensus.}
		\label{fig:frequencyExpeCompletenessFull}
	\end{figure*}
	
	We first explore the effects that $\dyn$ has on consensus, including generation, preservation, and loss of consensus, as well as preservation of the absence of consensus. We applied $\dyn$ on synthetic profiles and used a fixed update order.
	
	We varied the number of agents from 1 to 25, only considering odd numbers to avoid effects that rely solely on ties. For each case, we generated 5\,000\,000 random profiles over five alternatives. To generate a partial preference, we considered each pair of alternatives $\{a, b\}$ and decided with uniform probability whether $a$ beats $b$, $b$ beats $a$, or, $a$ and $b$ are not compared. We repeated this process until the generated set of pairwise comparisons was transitive. For a profile with $n$ agents, $n$ i.i.d.\ partial preferences were generated.
	
	Figure~\ref{fig:frequencyExpe} presents the complete effects that $\dyn$ can have on consensus depending on the number of agents. The first observation to make about this figure is that $\dyn$ performs particularly well according to both Condorcet  and plurality undominated consensus: consensus is almost always preserved if it was there to begin with, and generated otherwise. This trend is robust with respect to the number of agents. The picture is slightly different for majority undominated consensus. Even though $\dyn$ preserves consensus with high frequency, the frequency of generating consensus drops significantly with the number of agents. This is likely due to the fact that the majority threshold increases with the number of agents, making the consensus requirement more demanding. On the other hand, Condorcet and plurality-based requirements only depend on the profile and not on external parameters, hence their robustness. The jumps in the curves for ``Preserving Consensus'' for \unandom{}, \unanundom{} and \majdom{} are due to the very small number of random profiles initially presenting consensus.
	
	We also explored the impact that the level of completion has on each effect for all profiles with 15 agents. We computed the level of completeness as the proportion of pairwise comparisons actually provided in the profile. Figure \ref{fig:frequencyExpeCompletenessFull} presents the results. Perhaps surprisingly, the level of completeness does not significantly impact the effect on Condorcet and plurality-based consensus. Majority undominated consensus follows a more expected path: the higher the completion level, the less preferences are susceptible to change and thus the harder it is to generate consensus.
	
	
	\subsection{Quantifying the Opportunities for Control}
	
	\begin{figure*}
		\centering
		\includegraphics[width = \linewidth]{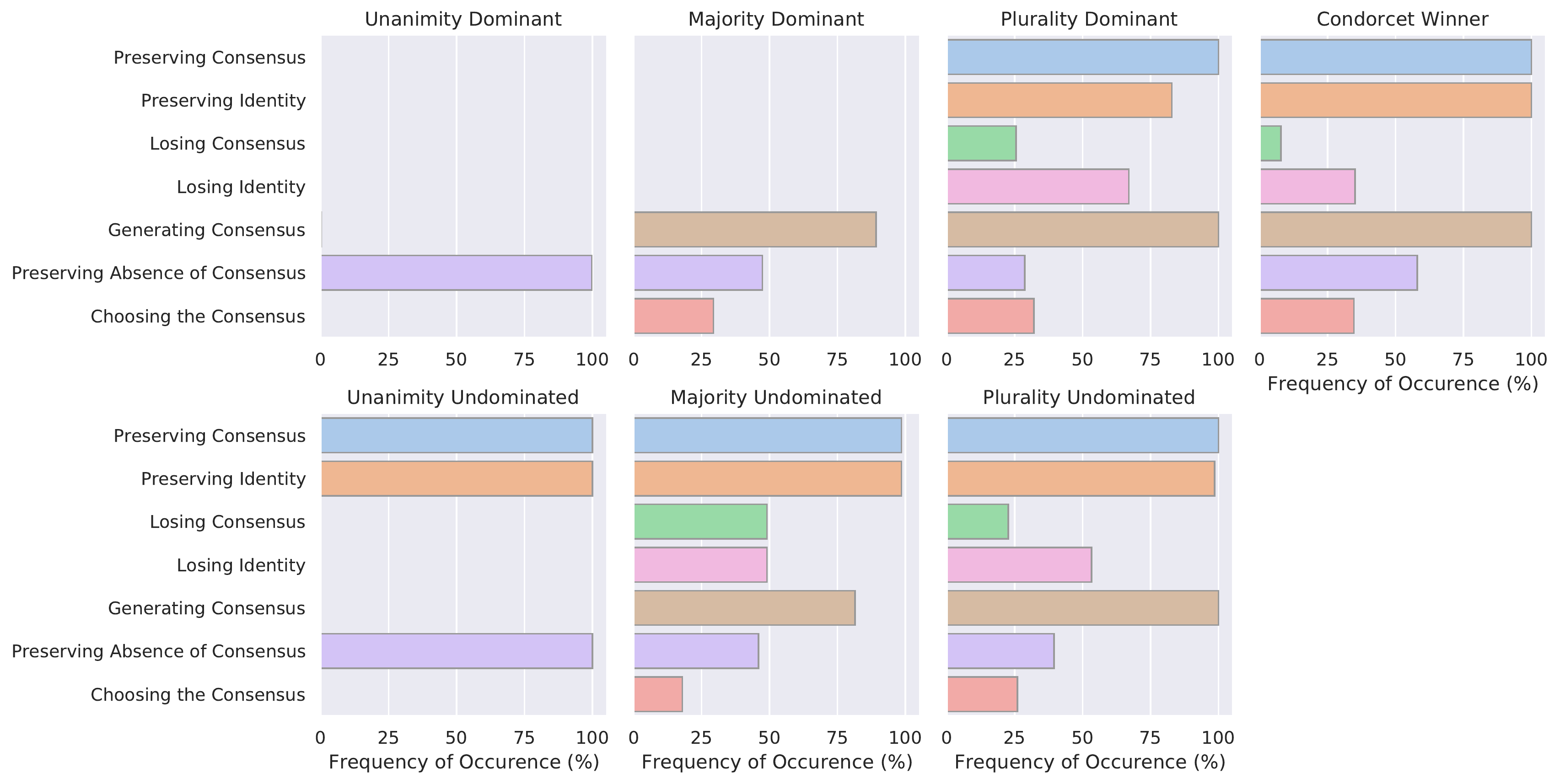}
		\caption{Frequency of each possible type of control. For the first four types, we normalized over the number of profiles with initial consensus; for the next two over the number of profiles without initial consensus; and for the last one over the total number of profiles. ``Choosing the consensus'' refers to the idea of making a specific alternative, say $a$, the consensus.}
		\label{fig:manipulation}
	\end{figure*}
	
	Through experiments, we also explore
	the question of controlling the consensus. Our goal is to quantify the power that the chair is granted by choosing the update order.
	
	We generated 20\,000 profiles of 11 agents over 4 alternatives. We run $\dyn$ on each profile and for all of the 46\,080 update orders. We counted for how many profiles at least one update order would allow the chair to achieve each type of control. The results are presented in Figure \ref{fig:manipulation}.
	
	We observed that $\dyn$ performs relatively well at preventing the chair from damaging the consensus: Only on few occasions can the chair make the consensus disappear or prevent it from being generated. A benevolent chair will however be much more successful, as it is very often possible to preserve the consensus and to generate it.
	
	Note here that for $\majdom$ it seems that for no profile can the chair select an update order to either preserve or lose consensus. This is actually not true but is a side effect of the number of profiles we generated (only 20\,000 as we had to run them on over 46\,000 update orders). Indeed, none of the sampled profiles had a majority dominant consensus at the beginning, leading to no observation of preserving or losing that consensus.
	
	
	\section{Conclusion}  \label{sec:conclusion}
	We have studied an original process of majority dynamics for agents with incomplete preferences. We have asked whether consulting the majority to fill missing opinions assists group consensus, and have answered that in the worst case it does not---only alternatives that are dominant for at least half of the agents are safe.  Countering this, we have provided some good news: Majority dynamics always preserve a Condorcet winner within natural profiles of 
	strict weak orderings, and also do so frequently within arbitrary ones. In addition, the chair of the process always has the power to choose an order of discussion such that consensus is preserved (unless we care about plurality-based consensus), and she can  very often generate a new one too (while she can rarely make an existing one disappear). Finally, if the chair plans to make a specific alternative the consensus, our experiments indicate that she can rarely choose an order to achieve it. 
	
	Yet, many questions remain open. For instance: What is the computational complexity of selecting a suitable order for control, or of minimising the number of updates until consensus is reached? In cases where consensus is not achieved, how far is the resulting profile from being consensual? These and other questions are left for future work.

\newpage

\section*{Acknowledgements}
Zoi Terzopoulou is a fellow of  the Paris Region Fellowship Programme, supported by the Paris Region. This project has received funding from the European Union's Horizon 2020 research and innovation programme under the Marie Sklodowska-Curie grant agreement No 945298-ParisRegionFP. Simon Rey is supported by NWO Vici grant 639.023.811 (``Collective Information’’). Sirin Botan is supported by the ARC Laureate Project FL200100204 ("Trustworthy AI")
	
	\bibliographystyle{named}
	\bibliography{consensus_incomplete_bib}
	
\end{document}